\renewcommand{\@biblabel}[1]{\quad#1.}
\newtheorem{theorem}{Theorem}
\newtheorem{lemma}[theorem]{Lemma}
\newcommand{\vect}[1]{\boldsymbol{\mathbf{#1}}}
\newcommand{\ld}{\mathcal{L}}
\newcommand{\comment}[1]{}
\newcommand{\ins}{\textsc{insight}\xspace}
\newcommand{\sel}{\text{sel}}
\newcommand{\neut}{\text{neut}}
\newcommand{\data}{\vect{X}}
\newcommand{\outgroup}{\vect{O}}
\newcommand{\pmaj}{p(A_i=X_i^{maj})}
\newcommand{\pmin}{p(A_i=X_i^{min})}
\newcommand{\psd}{p(S_i=\text{\sel},Z_i\neq X_i^{maj})}
\newcommand{\psnd}{p(S_i=\text{\sel},Z_i=X_i^{maj})}
\newcommand{\pn}{p(S_i=\text{\neut})}
\newcommand{\PD}{D_\text{p}}
\newcommand{\WP}{P_\text{w}}
\newcommand{\Ea}{{\mathbb{E}[\PD]}}
\newcommand{\Ew}{{\mathbb{E}[\WP]}}
\def\@cite#1#2{(#1\if@tempswa , #2\fi)}
\def\@biblabel#1{}
\begin{document}

\begin{titlepage}

\title{Inference of Natural Selection from Interspersed Genomic Elements
Based on Polymorphism and Divergence}

\author{Ilan Gronau$^{1}$, Leonardo Arbiza$^{1}$,  Jaaved Mohammed$^{1,2}$, and Adam Siepel$^{1}$}

\date{ }
\maketitle

\begin{footnotesize}
\begin{center}
$^1$Department of Biological Statistics and Computational Biology,
Cornell University, Ithaca, NY 14853, USA 
\\[1ex]
$^2$Tri-Institutional Training Program in Computational Biology and Medicine 
\end{center}
\end{footnotesize}

\vspace{1in}

\begin{tabular}{lp{4.5in}}
{\bf Submission type:}& Research Article
\vspace{1ex}\\
{\bf Keywords:}& 
Molecular evolution, population genetics, noncoding DNA, regulatory
sequences, probabilistic graphical models 
\vspace{1ex}\\
{\bf Running Head:}&Inference of Selection from Interspersed Genomic Elements
\vspace{1ex}\\ 
{\bf Corresponding Author:}&
\begin{minipage}[t]{4in}
 Adam Siepel\\
 102E Weill Hall, Cornell University\\
 Ithaca, NY 14853\\
 Phone: +1-607-254-1157\\
 Fax: +1-607-255-4698\\
 Email: acs4@cornell.edu
\end{minipage}\\
\end{tabular}
\vspace{7ex}\\
{\bf This is an electronic version of an article published in {\em Mol Biol Evol}, 2013. \href{http://mbe.oxfordjournals.org/content/30/5/1159}{doi:10.1093/molbev/mst019}.}
 
\thispagestyle{empty}
\end{titlepage}

\doublespacing

\section*{Abstract}



Complete genome sequences contain valuable information about natural
selection, but extracting this information for short, widely scattered
noncoding elements remains a challenging problem.  Here we introduce a new
computational method for addressing this problem called Inference of
Natural Selection from Interspersed Genomically coHerent elemenTs (\ins).
\ins uses a generative probabilistic model to contrast patterns of
polymorphism and divergence in the elements of interest with those in
flanking neutral sites, pooling weak information from many short elements
in a manner that accounts for variation among loci in mutation rates and
genealogical backgrounds.  The method is able to disentangle the
contributions of weak negative, strong negative, and positive selection
based on their distinct effects on patterns of polymorphism and divergence.
Information about divergence is obtained from multiple outgroup genomes
using a full phylogenetic model.  The model is efficiently fitted to
genome-wide data by decomposing the maximum likelihood estimation procedure
into three straightforward stages.  The key selection-related parameters
are estimated by expectation maximization.  Using simulations, we show that
\ins can accurately estimate several parameters of interest even in complex
demographic scenarios.  We apply our methods to noncoding RNAs, promoter
regions, and transcription factor binding sites in the human genome, and
find clear evidence of natural selection.  We also present a detailed
analysis of particular nucleotide positions within GATA2 binding sites and
primary micro-RNA transcripts.


\thispagestyle{empty}
\clearpage
\setcounter{page}{1}

\section*{Introduction}


Evolutionary modeling has become an essential tool in genomic analysis.  It
is particularly valuable in the study of noncoding genomic elements in
large, complex eukaryotic genomes, because these elements are often
sparsely annotated, poorly understood, and difficult to examine
experimentally.  Rapid growth in the availability of complete genome
sequences, both within and across species, has led to many new
opportunities for evolutionary genomic analysis.  Among other things,
evolutionary models have been used to measure the
fractions of nucleotides likely to have fitness-influencing functions
\citep{CONS02,CHIAETAL03,LUNTETAL06}, to distinguish functional from
nonfunctional sequences \citep{KELLETAL03,GUIGETAL03,SIEPETAL07}, and to
detect sequences likely to be responsible for phenotypic differences
between species \citep{POLLETAL06,PRABETAL08}.  


Most evolutionary analyses of noncoding elements so far have made use of
sequence conservation between genomes that diverged millions of years ago.
Many confounding factors limit the utility of these approaches, including
turnover of regulatory elements \citep{DERMCLAR02,MOSEETAL06,SCHMETAL10},
challenges in orthology identification, and alignment error.  In principle,
data describing genetic variation could help to address these limitations,
because it reflects evolutionary processes on much shorter timescales,
during which turnover should be much less prevalent.  Orthology
identification and alignment are also much more straightforward on these
time scales.  It is well known that patterns of polymorphism within a
species and divergence between species can be used to tease apart the
effects of positive selection, negative selection, and neutral drift for a
given collection of functional elements
\citep{MCDOKREI91,SAWYHART92,BUSTETAL05}.  In practice, however, it is
technically challenging to extract useful information about noncoding
elements from patterns of polymorphism and divergence for various reasons.
Many noncoding elements of interest, such as transcription factor binding
sites, are quite short (typically $<$10 bp) and polymorphisms tend to be
sparse, so that most elements contain no informative sites.
Furthermore, factors such as variation across loci in mutation rates and
time to most recent common ancestry, and the influence of demography on
patterns of polymorphism, make it difficult to interpret patterns of
polymorphism and divergence in noncoding elements, and prohibit
straightforward pooling of data from multiple elements across the genome.

Here we describe a new computational method, called \underline{I}nference
of \underline{N}atural \underline{S}election from \underline{I}nterspersed
\underline{G}enomically co\underline{H}erent elemen\underline{T}s (\ins),
that is designed to address these challenges.  \ins uses the general
strategy of contrasting patterns of polymorphism and divergence in a
collection of elements of interest with those in flanking neutral regions,
thereby mitigating biases from demography, variation in mutation rates, and
differences in genealogical backgrounds.  In this way, it resembles
McDonald-Kreitman-based methods for identifying departures from neutrality
\citep{MCDOKREI91,ANDO05,SAWYHART92,SMITEYRE02}.  Unlike these methods,
however, \ins is based on a generative probabilistic model, accommodates
weak negative selection \citep{CHAREYRE08}, and allows diffuse information
from many short elements across the genome to be pooled efficiently, in a
manner that avoids statistical pitfalls arising from pooling counts of site
classes \citep{SMITEYRE02,STOLEYRE11}.  Our modeling approach accommodates
variable mutation rates and times to most recent common ancestry
along the genome and fully integrates phylogenetic information from
multiple outgroup species with genome-wide population genetic data.  In
other recent work, we have applied \ins in a large-scale analysis of
transcription factor binding sites in the human genome, using chromatin
immunoprecipitation and sequencing (ChIP-seq) data for 78 human
transcription factors (TFs) from the ENCODE project \citep{BERNETAL12} and
54 unrelated complete human genome sequences from Complete Genomics
(\url{http://www.completegenomics.com/public-data/69-Genomes/})
\citep{DRMAETAL10}.  Our focus 
in this article is to detail the probabilistic model and inference strategy
underlying the method, and examine its performance on simulated data under
a range of scenarios.  In addition, we provide an analysis of several
additional classes of noncoding elements, a more detailed analysis of GATA2
binding sites, and a detailed analysis of individual positions within
primary micro-RNA transcripts.

\section*{Methods}

\subsection*{General Approach}

Our method is designed to measure the influence of natural selection on a
collection of genomic elements scattered across the genome (Fig.\
\ref{fig:model-schematic}).  The collection of interest could be defined in
various ways; for example, it could consist of all binding sites of a
particular transcription factor, all noncoding RNAs of a particular type,
or a subset of interest, such as binding sites near genes
of a particular functional category (see Discussion).  We assume the
individual elements are quite short---typically only a few nucleotides in
length, and not longer than a few kilobases.  The key modeling challenge
is to integrate sparse information from many such elements in a manner that
accounts for variation along the genome in properties such as mutation
rates and local genealogical structure.  (Note that, even with constant
mutation rates, regions will differ in their patterns of polymorphisms due
to differences in times to most recent common ancestry and other
properties.)  Rather than attempting to fully describe the relationships
among selection, polymorphism, and divergence---which is complex and
demography-dependent---our model works by contrasting patterns of
polymorphism and divergence in the elements of interest with those in
putatively neutral sites nearby.

We assume genome-wide polymorphism data is available for a particular
target population, in a form that allows polymorphic sites to be reliably
distinguished from invariant sites, and that provides reasonably accurate
information about allele frequencies.  At present, this is most easily
achieved using high-coverage individual genome sequences, although our
methods could also be adapted make use of statistically inferred genotype
frequencies based on low-coverage sequence data \citep{YIETAL10}.  
We further assume genome-wide data is
available for one or more outgroup species, typically in the form of
reference genome assemblies.  While the method can be used with a single
outgroup genome, it is highly desirable to make use of two or more
outgroups that diverged from one another prior to the divergence of either
from the target population.  In addition, the outgroup sequences generally
should be 
as closely related to the target population as possible.  This will ensure
the highest quality information about ancestral alleles for the target
population.  

We use a categorical model for the distribution of fitness effects (DFE).
Specifically, we 
assume each nucleotide site evolves according to one of four possible
selective modes: neutral drift (neut), strong negative selection (SN), weak
negative selection (WN), or strong positive selection (SP).  (Sites under
weak positive selection are assumed to be rare and are absorbed in the
neutral category.) This
coarse-grained approach is motivated by observations indicating that the
data contain only limited information about the full distribution of
fitness effects \citep{BOYKETAL08,WILSETAL11}.
These categories are chosen for having qualitatively distinct effects on
patterns of polymorphism and divergence \citep[see][]{BIEREYRE04}.  In
particular, our model makes use of the fact that strong selection (negative
or positive) generally causes mutations to be eliminated or reach fixation
rapidly,
while weak negative selection allows polymorphisms to persist for longer
periods of time, but tends to hold derived alleles at low
frequencies. Therefore, we assume that at nucleotide sites under selection,
(1) only SP sites make nonnegligible contributions to divergence, (2) only
WN sites make nonnegligible contributions to polymorphism, and (3) any
polymorphisms must have low derived allele frequencies.  Together, these
assumptions allow the fraction of sites under selection to be estimated.
As it turns out, they are not sufficient to fully disentangle the
contributions of all four selective modes, but they do allow us to obtain
indirect information about the contributions of positive and weak negative
selection at selected sites (see below).

In addition, our model reduces the site frequency spectrum (SFS) to three
classes: every site is considered monomorphic (M), polymorphic with a low-frequency
derived allele (L), or polymorphic with a high-frequency derived allele
(H), where the distinction between L and H sites depends on a designated
low-frequency threshold $f$ (typically $f=0.15$).  Information about
selection comes from the relative frequencies of these labels in the elements
of interest relative to the flanking neutral sites, together with patterns of
divergence with respect to the outgroup genomes.  A minor complication is
that in some cases, the derived allele class depends on the ancestral
allele, which is not known.  We address this problem by treating the
ancestral allele as a hidden (latent) random variable and integrating over possible
values as needed.  The use of a low-dimensional projection of the SFS is
intended to buffer our method from the effects of recent
demographic changes in the target population.  In the simulation analyses
reported below, we examine the extent to which our inferences are robust to
demography.  We also examine their dependence on the 
threshold~$f$.


\subsection*{Probabilistic Model}

Our model assumes that the genomic regions under study are partitioned into a
collection of blocks, $B$.  The nucleotide sites within
each block $b \in B$ are further partitioned into sites within the elements
of interest, $E_b$, and the associated neutral flanking sites, $F_b$
(cumulatively $E$ and $F$, respectively).  Each block is assigned a
population-scaled mutation rate ($\theta_b$), a neutral divergence scale
factor ($\lambda_b$), and an outgroup divergence scale factor
($\lambda^O_b$).  In addition, the model has four global parameters: the
fraction of sites under selection in elements ($\rho$), the relative
divergence ($\eta$) and polymorphism ($\gamma$) rates at selected sites,
and $\vect\beta$, a multivariate parameter summarizing the neutral site
frequency spectrum (see Table \ref{tab:parameters}).  The full set of
parameters is denoted $\vect\zeta$.  

Each site $i$ is associated with a set of aligned bases from outgroup
genomes ($O_i$) and the polymorphism data for the target population
($X_i$).  $X_i$ is further summarized as $X_i=(X_i^{\text{maj}},
X_i^{\text{min}}, Y_i)$, where $X_i^{\text{maj}}$ and $X_i^{\text{min}}$
are the observed major and minor alleles, and $Y_i \in $\{M, L, H\} is the
minor frequency class ($X_i^{\text{min}}=\emptyset$ when $Y_i= \text{M}$).
The entire data set is denoted by $(\data, \outgroup)$.  $Y_i$ is defined
by the observed minor allele frequency $m_i$ and the specified
low-frequency threshold, $f<\frac 12$; in particular, $Y_i=\text{M}$ when
$m_i=0$, $Y_i=\text{L}$ when $0<m_i<f$, and $Y_i=\text{H}$ when $m_i\geq
f$.  Sites with three or more alleles are discarded in pre-processing.
Each site is associated with three hidden variables: a selection class
($S_i\in\{\sel,\neut\}$), a ``deep'' ancestral allele at the most recent
common ancestor of the target population and closest outgroup ($Z_i$), and
a population ancestral 
allele ($A_i$) (Table \ref{tab:variables}).  In addition, when $Y_i =
\text{L}$, the model has to consider uncertainty in the derived allele
frequency class, which could be L or H, depending on the identity of the
ancestral allele.

We assume independence of blocks, conditional independence of the
nucleotide sites within each block given the model parameters, conditional
independence of the variables describing the target population ($A_i$,
$S_i$, and $X_i$) and the outgroups ($O_i$) given the deep ancestral allele
$Z_i$, and independence of the $S_i$ values given the parameter
$\rho$ (as shown graphically in Fig.\ \ref{fig:graphical-model}).  The same
graphical model applies to all sites, except that the
selection class is fixed to ``neut'' for the flanking sites.  Thus, a
likelihood function for the model, conditional on the outgroup data, can be
written as follows:
\begin{small}
\begin{align}\label{eqn:likelihood}
\ld(\vect \zeta\ ; \data,\outgroup) ~~&\equiv~~ P(\vect X\  |\  \vect O, \vect \zeta)
~~= \notag \\
\prod_{b\in B}~ 
&\left[\prod_{i\in F_b}
\sum_z \sum_a 
P(X_{i}, Z_{i}=z, A_{i}=a \ |\ S_{i}=\neut, 
O_{i}, \vect\zeta) \right] \notag \\
\times &\left[\prod_{i\in E_b}\sum_{s\in \{\neut,\sel\}}P(S_{i}=s\ |\ \vect\zeta)\
\sum_z \sum_a  P(X_{i}, Z_{i}=z, A_{i}=a \ |\  S_{i}=s, O_{i}, \vect\zeta)\right]~.
\end{align} \end{small}
Furthermore,
each term of the form $P(X_{i}, Z_{i}, A_{i} \ |\ S_{i}, O_{i},\vect\zeta)$ 
can be factorized as follows:
\begin{align}\label{eqn:sitewise-likeihood}
P(X_{i}, &Z_{i}, A_{i} \ |\  S_{i},
O_{i}, \vect\zeta) ~=~
P(Z_{i}\ |\ O_{i},\ \lambda^O_b)~
P(A_{i}\ |\ S_i, Z_{i},\ \vect\zeta)~
P(X_{i}\ |\ S_i, A_i, Z_i,\ \vect\zeta) 
\end{align}

This likelihood function is composed of four conditional probability
distributions, corresponding to the variables $S_i$, $Z_i$, $A_i$, and
$X_i$.  The  
distribution for $S_i$ is needed only for element sites and is
given by a two-component mixture 
model with coefficient $\rho$:
%
%
\begin{small}
\begin{equation}\label{eqn:mixture}
P(S_i=s\ |\ \vect\zeta) = \begin{cases}
\rho & s = \text{\sel}\\
1-\rho & s = \text{\neut}\\
\end{cases}~.
\end{equation}
\end{small}

The conditional distribution for $Z_i$ given the outgroup data, $P(Z_{i}\
|\ O_{i},\ \lambda^O_b)$, is based on a standard statistical phylogenetic
model and is computed using existing software.  Notice that our model
assumes that the phylogenetic model 
for the outgroups is independent of the selection class, $S_i$.  This
assumption is not strictly warranted (sites under selection are likely to
evolve at different rates in the outgroups), but it dramatically simplifies
the inference procedure by allowing us to pre-estimate the outgroup scale
factors ($\lambda^O_b$) and the sitewise distributions for $Z_i$ (see {\bf
  Parameter Inference}). It also allows us to avoid specifying a model for
the poorly understood process of turnover of functional elements.  In
practice, this
simplifying assumption is of little consequence, 
because it only affects the prior distribution for $Z_i$, which is fairly
insensitive to evolutionary rates in outgroup lineages as long as
the branches of the phylogeny are not too long.



The third conditional distribution, $P(A_{i}\ |\ S_i, Z_{i},\ \vect\zeta)$,
describes the process of sequence divergence on the lineage leading to the
target population.  Given a global neutral branch length $t$ for this
lineage (in substitutions per site), we assume a nucleotide substitution
rate of $\lambda_bt$ for neutral sites and $\eta\lambda_bt$ for sites under
selection.  Note that $\eta$ can be driven downward by negative selection
or upward by positive selection so it may be greater or less than one,
depending on the DFE.  In principle, any DNA substitution model could be
used to define $P(A_{i}\ |\ S_i, Z_{i},\ \vect\zeta)$.  However, because we
are primarily interested in cases in which $t$ is quite small (e.g., $t
\approx 0.005$ for the case of humans and chimpanzees), we assume a
Jukes-Cantor or Poisson substitution model and approximate the divergence
probabilities as:
\begin{small}
\begin{equation}\label{eqn:div}
P(A_i=a\ |\ S_i=s, Z_i=z, \vect \zeta) ~=~
\begin{cases}
\frac13 \lambda_b t & s=\text{\neut},\ a \ne z \\
1- \lambda_b t & s=\text{\neut},\ a = z \\
\frac13 \eta\lambda_b t & s=\text{\sel},\ a \ne z \\
1- \eta\lambda_b t & s=\text{\sel},\ a = z \\
\end{cases}
\end{equation}
\end{small}

Finally, the fourth conditional distribution, $P(X_{i}\ |\ S_i, A_i, Z_i,\
\vect\zeta)$, describes the patterns of polymorphism in the target
population given the ancestral alleles and selection class.  In deriving
this expression, we first assume an infinite sites model for the time since
the population-level MRCA (which is expected to be much shorter than the
time since the deep ancestral allele), implying that
$A_i\in\{X_i^{\text{maj}},X_i^{\text{min}}\}$.  The neutral
population-scaled mutation rate is given by $\theta_b=4N_b\mu_b$, where
$N_b$ is a hypothesized block-specific effective population
size.  Because $\theta_b$ is estimated freely, without any constraints on
$\mu_b$ or $N_b$, the model can accommodate sources of variation in
nucleotide diversity other than variable mutation rates, such as selection
from linked sites (i.e., background selection or hitchhiking).  Sites under
selection ($S_i = \text{sel}$) are assumed to have a population-scaled
mutation rate of $\gamma\theta_b$ (we expect, but do not require,
$\gamma<1$). Given a population-scaled mutation rate of $\theta_b$, the
probability of observing a polymorphic sites in a sample of 
size $n$ is given by $\theta_b a_n$, where $a_n = \sum_{k=1}^{n-1}1/k$
\citep{WATT75}. In the absence of missing data, $a_n$ is a constant of no
consequence in the inference procedure, but it can be used to accommodate
missing data if desired (see Discussion and Supplementary Methods).

Under neutrality, the polymorphism and divergence components of the model
are assumed to be independent; i.e., $X_i$ and $Z_i$ are conditionally
independent given $A_i$ and $S_i=\text{neut}$.  In this case, the derived
allele in a polymorphic site is assumed to be chosen uniformly at random
from the three bases not equal to $A_i$, and the derived allele frequency
class is assumed to be chosen at random from the three intervals $(0,f)$,
$[f,1-f]$, or $(1-f,1)$ with probabilities $\beta_1,\beta_2$, and
$\beta_3$, respectively.  The distinction between the two high-frequency
classes, $[f,1-f]$ and $(1-f,1)$, is required because they correspond to
different minor allele frequency classes ($Y_i=\text{H}$ and
$Y_i=\text{L}$, respectively), which, in general, will have different
probabilities.  Thus, in the case of $S_i=\text{neut}$, the conditional
probability for $X_i$ is given by:

%
%
\begin{small}
\begin{align}\label{eqn:poly-neut}
P\left(X_i=(x^{\text{maj}},x^{\text{min}},y) \ |\ S_i=\neut, A_i=a, Z_i,\
  \vect\zeta\right) &= \notag \\
P\left(X_i=(x^{\text{maj}},x^{\text{min}},y) \ |\ S_i=\neut, A_i=a,\ \vect\zeta\right) &=
\begin{cases}
1- \theta_b a_{n} & y = \text {M},\ a=x^{\text{maj}} \\
\frac 13 \beta_1 \theta_b a_n & y = \text {L}\ ,\ a=x^{\text{maj}} \\
\frac 13 \beta_3 \theta_b a_n & y = \text {L}\ ,\ a=x^{\text{min}} \\
\frac 13 \beta_2 \theta_b a_n & y = \text {H}\ ,\ a\in \{x^{\text{maj}},x^{\text{min}}\} \\
0 & \text{otherwise}
\end{cases}
\end{align}
\end{small}

The model for polymorphism at selected sites is similar, but incorporates our two main assumptions
regarding sites under selection, namely that polymorphisms are restricted to WN sites, implying that
they do not occur in sites that have experienced divergence (hence the conditional dependence in $Z_i$),
and that WN polymorphisms have low derived allele frequencies, implying that $Y_i=\text{L}$ and $X^{\text{maj}}_i=A_i$:
%
\begin{small}
\begin{align}\label{eqn:poly-sel}
P\left(X_i=(x^{\text{maj}},x^{\text{min}},y)  |\ S_i=\sel, A_i=a, Z_i=z,\ \vect\zeta\right) &=
\begin{cases}
1- \gamma\theta_b a_n & y = \text {M},\ z = a = x^{\text{maj}} \\
1 &  y = \text {M},\ z\neq a = x^{\text{maj}} \\
\frac 13 \gamma \theta_b a_n &  y = \text {L},\ z = a = x^{\text{maj}} \\
0 & \text{otherwise}
\end{cases}
\end{align}
\end{small}

Finally, the models for polymorphism and divergence (Equations
\ref{eqn:div}--\ref{eqn:poly-sel}) can be
combined into a single conditional distribution
table, $P(X_{i}\ |\ Z_{i}, S_{i}, \vect \zeta)$ (Table \ref{tab:cond-dist}),
by integrating over the cases of $A_i\in\{X_i^{\text{maj}},X_i^{\text{min}}\}$.


\subsection*{Parameter Inference}\label{subsec:model-infer}

The main objective of the inference procedure is to produce maximum
likelihood estimates (MLEs) of the selection parameters, $\rho$, $\eta$,
and $\gamma$, but in order to do so, the neutral parameters
$\vect{\zeta_{\text{\neut}}}=\left(\vect{\lambda^O},\vect{\lambda},
  \vect{\theta},\vect{\beta}\right)$ must also be estimated.  In principle, an
expectation-maximization (EM) algorithm could be used to jointly estimate all
model parameters.  However, this approach is impractical for genome-wide
applications involving millions of nucleotide sites.  Instead, we take
advantage of the ``loose coupling'' between the phylogenetic outgroup model and the remaining portions of the model, and between the portions of the
model concerned with the elements and the flanking sites, to decompose
the inference procedure into separate stages, each of which can be
performed fairly simply and efficiently. 

First, observe that the likelihood function can be viewed as a product of
a function
of the flanking sites and a function of the element sites.  The first
function does not depend on the selection parameters. 
Moreover, if the flanking sites significantly outnumber the neutral sites
within the elements, as we expect, then the neutral parameters can be
estimated to a good approximation by maximizing this 
function only.  The selection parameters can then be estimated by
conditionally maximizing the second function.
More precisely, we represent the likelihood function as:
\begin{equation}
\ld(\vect \zeta\ ; \data,\outgroup) = \ld_F(\vect{\beta},\vect{\lambda^O},\vect{\lambda},\vect{\theta}\
;\ \data_F,\outgroup_F) \times \ld_E(\rho, \eta, \gamma\ ;\ \data_E,\outgroup_E,
\vect{\beta},\vect{\lambda^O},\vect{\lambda},\vect{\theta}), 
\end{equation}
where
\begin{small}
\begin{align}
\ld_F(\vect{\beta},\vect{\lambda^O},\vect{\lambda},\vect{\theta}\
;\ \data_F,\outgroup_F) &~=~   \prod_{b\in B} ~\prod_{i\in F_b} P(X_{i} \ |\
S_{i}=\text{\neut},\ O_{i},\ \lambda^O_b,\lambda_b,\theta_b,
\vect{\beta})~, \label{eqn:likelihood-neut} \\
\ld_E(\rho, \eta, \gamma\ ;\ \data_E,\outgroup_E,
\vect{\beta},\vect{\lambda^O},\vect{\lambda},\vect{\theta}) &~=~ \prod_{b\in
  B} ~\prod_{i \in E_b} \sum_{s \in {\text{neut, sel}}}P(S_i=s|\rho)\,
P(X_{i} \ |\ S_{i}=s,\ O_{i},\ \eta, \gamma,
\lambda^O_b,\lambda_b,\theta_b,  \vect{\beta}), \label{eqn:likelihood-sel}
\end{align} \end{small}

\vspace{-4ex}
\noindent and we estimate the neutral parameters by maximizing Equation
\ref{eqn:likelihood-neut}, then estimate the selection parameters by
conditionally maximizing Equation \ref{eqn:likelihood-sel}.


Furthermore, the
phylogenetic and population genetic parameters in Equation \ref{eqn:likelihood-neut}
can be estimated separately by making some additional minor simplifying
assumptions. Briefly, the divergence scale factors 
$\lambda_b$ and $\lambda^O_b$ are first estimated by fitting a
pre-estimated neutral phylogenetic model to putative neutral sites in each
genomic block using standard phylogenetic fitting procedures
\citep{HUBIETAL11} (see Supplementary Methods).  The fitted
phylogenetic model is then used to compute the prior distribution for
ancestral alleles,
$P(Z_i\ |\ O_i,\lambda^O_b)$, at all sites in the block (including $E_b$),
conditioning on the outgroup sequences only.  Next, maximum likelihood
estimates of the block-specific polymorphism rate parameters,
$\hat{\theta}_b$, and the global parameter $\beta_2$ are obtained using
simple closed-form expressions (see Supplementary Methods). Due to
uncertainty about the ancestral allele, $\beta_1$ and $\beta_3$ do not have
closed-form estimators, and are estimated by a simple EM algorithm.

Finally, the selection parameters are estimated
conditional on the neutral parameters by maximizing Equation
\ref{eqn:likelihood-sel} by EM.
To derive this algorithm, let us first imagine that all variables are
observed, and denote by
$c_Q(\mathcal{X})$ the number of sites in a set $Q$ that have a
configuration $\mathcal{X}$.  Using this notation, the complete-data 
log-likelihood function for selected sites can be expressed as:
\begin{small}
\begin{align}
\ln [\ \ld_E(\rho, \eta, \gamma& ;\ \data_E,\outgroup_E,
\vect{\hat\beta},\vect{\hat\lambda^O},\vect{\hat\lambda},\vect{\hat\theta})\ ]
 = \notag \\
\notag &c_E(S_i=\text{\sel})\ln(\rho) ~+~c_E(S_i=\text{\neut})\ln(1-\rho) ~~+\\
\notag &c_E(S_i=\text{\sel},Z_i\neq X_i^{\text{maj}})\ln(\eta) ~+~ \sum_{b\in B}c_{E_b}(S_i=\text{\sel},Z_i= X_i^{\text{maj}})\ln(1-\eta\lambda_bt) ~~+ \\
&c_E(S_i=\text{\sel},Y_i=\text{L})\ln(\gamma) ~+~ \sum_{b\in B}c_{E_b}(S_i=\text{\sel},Y_i=\text{M},Z_i= X_i^{\text{maj}})\ln(1-\gamma\theta_ba_n) ~~+~~C~,~
\label{eqn:sel-likelihood}
\end{align}
\end{small}
%
%

\vspace{-4ex}
\noindent where $C$ is a constant term that does not depend on $\rho,
\eta$, or $\gamma$.  In practice, 
the counts in Equation \ref{eqn:sel-likelihood} depend on
the hidden variables $Z_i$ and $S_i$, so this function must be iteratively
optimized by EM.  Briefly, the E step in each iteration uses the current
estimates of the model parameters to obtain expectations of these
counts by computing the sitewise posterior probabilities of the variable
configurations $(S_i=\neut)$, $(S_i=\sel, Z_{i}= X^{\text{maj}}_{i})$, and
$(S_i=\sel, Z_{i}\neq X^{\text{maj}}_{i})$.  These calculations make use of
the conditional probabilities in Table \ref{tab:cond-dist} and the
pre-computed prior distributions, $\{P(Z_i\ |\ O_i,\hat\lambda^O_b)\}_{i\in
  E}$. After obtaining the expected
counts, the M step updates the selection parameters $\rho,\eta$, and
$\gamma$ to values that maximize the expected log-likelihood function. The
update for $\rho$ is achieved using a close-form expression, while the
updates for $\eta$ and $\gamma$ require numerical optimization of a concave
function (see Supplementary Methods).

\subsection*{Extracting Information about the Modes of Selection}
\label{subsec:interpret}

While the model does not permit direct estimation of the fractions of sites
under weak negative (WN), strong negative (SN), or strong positive (SP)
selection, it can be used to obtain indirect measures of the impact of WN
and SP selection.  In particular, a useful measure of SP selection is
$\PD$, the number of divergence events driven by positive selection
(sometimes called ``adaptive substitutions'') on the branch to the target
population.  A similar measure pertaining to WN selection is $\WP$, the
number of polymorphic sites subject to selection. Expected values for $\PD$
and $\WP$ can be obtained by summing over site-wise posterior probabilities
associated with the variable configurations $(Y_i=\text{M},Z_i\neq A_i,
S_i=\sel)$ and $(Y_i=\text{L}, S_i=\sel)$, respectively (see Supplementary
Methods). These calculations make use of our assumptions that, among
selected sites, divergence events occur only due to SP selection, and
polymorphisms occur only due to WN selection and are restricted to
frequency class `L'.  In our analysis, we normalize $\Ea$ and $\Ew$ by
dividing them by the number of nucleotide sites considered (in kilobases),
to allow comparisons between sets of different sizes.  By dividing $\Ea$ by
the total (expected) number of 
divergences, one can alternatively obtain an estimate of the fraction of
substitutions driven by positive selection, a quantity known as $\alpha$
\citep[e.g.,][]{SMITEYRE02,ANDO05} (see Supplementary Methods).

\subsection*{Confidence Intervals and Likelihood Ratio Tests}

Standard errors for the estimated selection parameters were estimated using
the curvature method \citep{LEHMCASE98}, based on an approximate Fisher
information matrix derived from the $3 \times 3$ matrix of second
derivatives for the log-likelihood function for $\rho$, $\eta$, and
$\gamma$ (Equation \ref{eqn:likelihood-sel}) at the joint MLE (see
Supplementary Methods).  In addition, we used likelihood ratio tests (LRTs)
to evaluate evidence for selection in general ($\rho>0$), positive
selection ($\eta>0$), and weak negative selection ($\gamma>0$).  The LRTs
were performed by fitting the model to the data twice, once with no
restrictions on the free parameters, and once with a parameter of interest
fixed at zero.  Twice the difference in log likelihoods was then treated as
a test statistic and compared to an appropriate asymptotic distribution.
The tests for $\eta>0$ and $\gamma>0$ involve nested models in which the
null hypothesis falls at a boundary of the alternative hypothesis.  The
associated test statistics therefore have asymptotic null distributions
equal to a 50:50 mixture of a $\chi^2$ distribution with one degree of
freedom (dof) and a point mass at zero \citep{CHER54,SELFLIAN87}.  The case
of $\rho$ is more complex, because a value of $\rho=0$ causes $\eta$ and
$\gamma$ to become irrelevant to the likelihood function, violating the
regularity conditions for the asymptotic mixture distribution.  Still, it is
reasonable to expect that the asymptotic distribution will be approximately
given by a 50:50 mixture of a $\chi^2$ distribution with 3 dof and a point
mass at zero.  These asymptotic distributions are of course not guaranteed
to hold 
for real data sets, and we use them only for approximate
assessments of statistical significance (see {\bf Results}).


\subsection*{Implementation and Software}

The \ins software consists of several modules.
The main module is a C program implementing the 
EM algorithm for inference of the selection parameters, as well as the
simpler EM algorithm for estimation of $\beta_1$ and $\beta_3$. This
program outputs maximum likelihood estimates of $\rho$, $\eta$, and
$\gamma$, the posterior expected values $\Ea$, $\Ew$, and $\alpha$, and
approximate standard errors for the reported values.
The phylogenetic model fitting stage is implemented separately using
procedures from RPHAST \citep{HUBIETAL11}, and additional scripts are used
for processing and filtering the polymorphism data.  Source code,
documentation and sample files are available for download from
\url{http://compgen.bscb.cornell.edu/INSIGHT/}.

\subsection*{Simple Site-count-based Estimates}

For comparison with our model-based estimates, we made use of simple
estimators for the fraction of sites under selection ($\rho$) and the
number of adaptive substitutions ($\PD$).  These estimators are based on
the numbers of 
polymorphisms in element and flanking sites, denoted $P_E$ and $F_E$,
respectively, and the numbers of divergence events in element and flanking
sites, denoted $D_E$ and $D_F$, respectively.  They include a
divergence-based estimator for $\rho$ introduced by \cite{KONDCROW93},
\begin{equation}
\hat\rho_{\text{Div}} ~=~~ 1 - \frac{D_{E}\ |F|}{|E|\ D_{F}},
\label{eqn:div-rho}
\end{equation}
a parallel estimator based on polymorphism rates,
\begin{equation}
\hat\rho_{\text{Poly}} ~=~~ 1 - \frac{P_{E}\ |F|}{|E|\ P_{F}},
\label{eqn:poly-rho}
\end{equation}
and an estimator for $\Ea$ based on the
McDonald-Kreitman \citeyearpar{MCDOKREI91} test, adapted from
\cite{SMITEYRE02}:
\begin{equation}
\hat\PD_{\text{-MK}} ~~=~~ 
D_E\ - \frac{P_{E}\ D_{F}}{P_{F}}~.
\label{eqn:mk-pd}
\end{equation}
In comparison with our model based estimates, the divergence-based
estimator $\hat\rho_{\text{Div}}$ ignores the effect of positive
selection, and the estimators $\hat\rho_{\text{Poly}}$
and $\hat\PD_{\text{-MK}}$ both implicitly assume no polymorphisms occur in
selected sites, thus ignoring the effects of WN selection. All
three estimators share the limitation of pooling counts across elements in
a manner that does not account for variable mutation rates across loci.

\subsection*{Simulations}

Simulated elements and flanking regions were generated with the forward
simulator SFS\_CODE \citep{HERN08}, assuming various mixtures of selective
modes for the elements.  We simulated data for human populations and
chimpanzee, orangutan, and rhesus macaques outgroups, using parameters
based on previous studies.  Each simulated block consisted of a 10 bp
element, reflecting a typical binding site, and 5,000 flanking neutral
sites on each side.  We assumed a constant recombination rate and a
randomly varying mutation rate, and each nucleotide position was assigned
to one of four selection classes: neutral evolution ($2N_es=0$), strong
negative selection ($2N_es=-100$), weak negative selection ($2N_es=-10$),
and positive selection ($2N_es=10$).
Our choices of population-scaled selection coefficients were approximately based
on several other recent studies
\cite[e.g.,][]{EYREETAL06,BOYKETAL08,WILSETAL11}.  
Selection at WN and SN sites was held
constant across the phylogeny, while for SP sites we assumed an interval of
positive selection followed by weak negative selection on the lineage
leading to the human population, to simulate selective sweeps rather than
recurrent positive selection (see Supplementary Methods for complete
details).  The 10 kb flanking sites were all assigned to the neutral class,
and the 10 bp of each simulated element were allocated among the four
classes by multinomial sampling.  In addition to assuming a range of
mixtures of selective modes, we considered scenarios with various numbers
of elements (ranging from 10,000--20,000).

The values of $\rho$, $\Ea$ and $\Ew$ estimated by \ins were compared with
``true'' values for each simulation.  The true value of $\rho$ was simply
the fraction of sites assumed to be under selection during data generation.
The true value of $\PD$ was taken to be the number of actual divergence
events that occurred in sites under positive selection.  The true value of
$\WP$ was taken to be the number of negatively selected sites that are
polymorphic.  In computing this quantity we allowed for both strong and
weak negative selection, because the distinction between them is somewhat
arbitrary in our model.  For $\rho$ and $\PD$ we also compared our
model-based estimates with the simple estimates based on counts of
polymorphic and divergent sites (Equations
\ref{eqn:div-rho}--\ref{eqn:mk-pd}).

\subsection*{Analysis of Human Noncoding Genomic Elements}

In our experiments on real data, we made use of the 69 individual human
genome sequences recently released by Complete Genomics
(\url{http://www.completegenomics.com/public-data/69-Genomes/}) \citep{DRMAETAL10}, using data
for 54 unrelated individuals.  While larger data sets are available
\citep{1KGCONS10}, this one was selected for its high coverage, which
reduces the effect of genotyping error and allows singleton variants to be
characterized with fairly high confidence.  For outgroup genomes, we used
the chimpanzee (panTro2), orangutan (ponAbe2), and rhesus Macaque (rheMac2)
reference genomes.  Various filters were applied to guarantee high quality
alignments and variant calls (see Supplementary Methods).  Putatively
neutral sites were identified by excluding exons of known protein-coding
and RNA genes plus 1kb of flanking sites on each side, and previously
predicted conserved noncoding elements plus flanking regions of 100 bp.
After these filters were applied, average of 3,881 sites per 10,000 bp
block remained. Genomic blocks with $<$100 putative neutral sites were
discarded.

We examined several classes of short interspersed noncoding elements in the
human genome, including (1) several collections of regulatory noncoding
RNAs from GENCODE V.13 \citep{HARRETAL06} (See Supplementary Methods), (2)
proximal promoters of known genes (defined as 100 bp upstream the
transcription start site), and (3) a collection of GATA2 transcription
factor binding sites.  The GATA2 binding sites were identified by a
pipeline developed another recent study \citep{ARBIETAL12}, based on
genome-wide chromatin immunoprecipitation and sequencing (ChIP-seq) data
from the ENCODE project \citep{BERNETAL12}.
To improve efficiency, we performed the phylogenetic model fitting stage of
our analysis on a fixed set of 10kb genomic windows (overlapping by 5kb),
in a preprocessing step.  We fitted a neutral model estimated from fourfold
degenerate sites to the neutral sites in each window by estimating two
scale factors, one for the branch to the human genome ($\lambda_b$) and one
for the other branches in the tree ($\lambda^O_b$; see \cite{POLLETAL09}
for details).  This analysis assumed a (((human, chimpanzee), orangutan),
rhesus macaque) tree topology.  After fitting the model, we also computed
conditional distributions for the ancestral allele $Z_i$ given the outgroup
sequences at each nucleotide position $i$.  We also estimated $\theta_b$
for each block.  The estimates of $\lambda_b$ and $\theta_b$, and the
distributions for $Z_i$, were recorded in a database and used in all
subsequent analyses.

\section*{Results}

\subsection*{Simulations}

We applied \ins to various collections of synthetic elements and compared
our model-based parameter estimates both with ``true'' values reflecting
the simulated evolutionary histories and with various simple estimators
based on counts of polymorphisms and divergences (see {\bf Methods}).  We
simulated collections roughly similar to our real data sets (Supplementary
Table\ \ref{tab:element-classes}), with 10,000--20,000 blocks consisting of
10 bp elements and 10 kbp of flanking neutral sequence.  We considered a
range of mixtures of neutral, weak negative (WN), strong negative (SN), and
strong positive (SP) selection (see {\bf Methods}).  Here we focus on four
representative data sets: (1) one with relatively few sites under selection
(10\%) and negative selection only (`Neg' in Fig.\ \ref{fig:sim-results}A);
(2) another with a moderate fraction of sites under selection (30\%),
including a substantial fraction (5\%) under positive selection (`Pos');
(3) another with a high fraction of sites under weak negative selection
(50\%) and no sites under positive selection (`Weak'); and, finally, (4) a
set with a substantial fraction of sites in each of the selective modes
(`Mix').  We found that our model-based estimates of $\rho$ and $\PD$ were
highly accurate across all mixtures of selective modes.  Our estimates of
$\WP$ were also reasonably accurate, but had slightly larger confidence
intervals.  The simple estimators also performed reasonably well in many
cases, but the divergence-based estimators for $\rho$ were strongly biased
by positive selection (e.g., $\hat\rho_{\text{Div}}=-0.52$ in Pos and
$\hat\rho_{\text{Div}}=-0.13$ in Mix).  The reason for this bias is that
these estimators implicitly attribute all divergence to neutral drift, an
assumption that is violated by non-negligible levels of positive selection.
Similarly, the polymorphism-based estimator for $\rho$ was biased downward
in the presence of weak negative selection (e.g.,
$\hat\rho_{\text{Poly}}=0.59$ and $\rho_{\text{True}}=0.8$ in Mix), because
this estimators implicitly assumes that selection completely eliminates
polymorphism, which is not true in this case.  For similar reasons, the
McDonald-Kreitman (MK)-based estimates of the number of adaptive
divergences ($\hat\PD_{\text{-MK}}$) was also biased in the presence of
weak negative selection \cite[see][]{CHAREYRE08}.

These synthetic data sets---generated by forward simulation, under fairly
realistic assumptions---also enabled us to directly evaluate the
assumptions underlying our model.  Consistent with our assumptions, no
mutations reached fixation in the 34,000 negatively selected sites (weak or
strong) in our synthetic data sets.  Thus, our simulations strongly support
the critical assumption enabling the posterior expected number of
divergences under selection ($\Ea$) to be interpreted as a measure of
positive selection.  On the other hand, selected polymorphisms were not
completely restricted to WN sites, as assumed; instead, 8\% of
polymorphisms under selection occurred in SN sites and 9\% in SP sites,
with the remaining 83\% at WN sites.  However, the distinction our model
makes between WN and SN sites is inevitably somewhat arbitrary, and some
residual polymorphism in SN sites should have little impact on our
inference procedure.  (Indeed, it may be best to think of the WN sites as
being operationally defined as those negatively selected sites in which
polymorphisms are possible.)  On the other hand, the presence of some
polymorphic SP sites could lead to over-estimation of $\WP$, because these
sites will tend to be assigned to the WN class.  However, our inference
procedure appeared to be robust to minor violations of this assumption,
with no significant over-estimation of $\WP$.  Importantly, only a small
fraction (4\%) of all selected polymorphisms exhibited derived allele
frequencies $>$15\%, and these were vastly outnumbered by neutral
high-frequency polymorphisms.  Thus, while the simulation study did not
fully support our modeling assumptions, only fairly minor violations were
observed and our inference procedure seemed to be robust to them.

In the above analysis, we assumed a low-frequency threshold of $f=15$\%,
similar to previous studies \citep{FAYETAL01,ZHANGLI05,CHAREYRE08}.  In
reality, of course, the upper bound for the derived allele frequency at
negatively selected sites depends on various factors, including the actual
distribution of selection coefficients and the demographic history of the
sample.  To test the robustness of our model to the choice of $f$, we
generated eleven collections of 10,000 elements with true fractions of
sites under selection ranging from 0 to 1 (in steps of 0.1), keeping the
proportion within selected sites in each collection constant at 45\% WN,
50\% SN, and 5\% PD.  We then applied \ins to each data set using values of
$f$ ranging from 1\% to 40\% (Fig.\ \ref{fig:sim-results}B).  We found that
very low thresholds ($f<7$\%) resulted in clear under-estimation of all
model parameters, due to the presence of selected polymorphisms with DAF
exceeding the threshold, while very high thresholds ($f>20$\%) led to high
variance and some downward bias in the estimates, due to sparse data for
high-frequency polymorphisms.  Importantly, however, no bias was observed
for thresholds in the range of 7--20\%, indicating robustness to the
specific choice of threshold and justifying our default choice of 15\%.

An important feature of our model is that it directly contrasts sequence
patterns in elements with those in nearby neutral sites, which should make
it insensitive to the particular demographic history of the target
population. To test robustness to demography, we simulated data sets for
each of 
eleven mixtures of selective modes described above using four different
demographic 
scenarios for the target population: one with constant population size
since divergence from chimpanzee, one with a moderate population expansion,
and two others with a severe population bottleneck followed by an
exponential expansion (Supplementary Table\ \ref{tab:demography}).
Inference was performed separately for each of these 4$\times$11 data sets.  The
estimated parameters were then compared with their true values and with the
simple count-based estimates (Fig.\ \ref{fig:sim-results}C).  The
divergence-based estimates, $\hat\rho_{\text{Div}}$, were quite poor due to
the effects of positive selection, as discussed above. The
polymorphism-based estimates, $\hat\rho_{\text{Poly}}$, also consistently
under-estimated the true values, by an average of 24\% in the first two
(more moderate) scenarios and an average of 42\% in the scenarios with
bottlenecks. A more severe underestimation in the second two scenarios was
also observed for the MK-based estimator of the number of adaptive
divergences, $\hat\PD_{\text{-MK}}$.  In both cases, these underestimates
may reflect an increased influence from weak negative selection in
populations that have undergone bottlenecks.  In contrast, our model-based
estimates of $\rho$ and $\PD$ showed no apparent bias in any of the
simulated demographic scenarios.  Estimates of the number of polymorphisms
under selection, $\WP$, showed somewhat greater variance, as observed in
our initial simulation study (Fig.\ \ref{fig:sim-results}A), but the error
in these estimates did not seem to be affected by demography.  Thus, our
method appears to be capable of disentangling the contributions of positive
and negative selection even in the presence of a complex demographic
history, without the need for explicit demographic inference.

\subsection*{Analysis of Human Noncoding Genomic Elements}


We next applied \ins to real human genomic data, using 54 unrelated
individual genomes from Complete Genomics to 
define human polymorphisms and the chimpanzee, orangutan, and macaque
genomes as outgroups (see {\bf Methods}).  First, we applied the method to
randomly selected ``neutral'' regions (arbitrary genomic regions excluding
genes, conserved noncoding elements, and their immediate flanks; see {\bf
  Methods}), to ensure that it adequately controls for false
positive inferences of selection in real data.  From the previously
identified putatively neutral regions, we sampled 500 mutually exclusive
collections of roughly 30,000 ``neutral elements,'' 10 bp long.  For each
collection, we estimated $\rho$ and the corresponding likelihood ratio test
(LRT) statistic for the null hypothesis of $\rho=0$.  The 500 estimated
values of $\rho$ were generally close to zero, with a median of 0.03
(Supplementary Fig.\ \ref{fig:neutral-rho-hist}) and almost no values
$>$0.1.  The distribution of LRT statistics was roughly similar to a
50:50 mixture of a point mass at zero and a $\chi^2$ distribution with
three degrees of freedom, as expected (see {\bf Methods}), but did
show a clear shift toward large values relative to this distribution
(Fig. \ref{fig:genomic-results}A).
This shift may reflect violations of our simplifying assumptions in real
genomic data (e.g., variation in mutation rates within blocks),
contributions from alignment errors, or the inclusion of some functional
sites within our ``neutral'' elements.  Nevertheless, we found that the use
of a more conservative (non-mixed) $\chi^2$ distribution with
three degrees of freedom adequately controlled for the excess in large LRT statistics
(Fig. \ref{fig:genomic-results}A).  In particular, four of our data sets
(0.8\%) had LRT statistics that exceeded the $p=0.01$ cutoff and 24 data
sets (4.8\%) had LRT statistics that exceeded the $p=0.05$ cutoff,
indicating a good fit at the tail of the distribution.  Thus, we use this
distribution for approximate calculations of nominal $p$-values in our
subsequent analyses.

Next we examined five classes of noncoding elements annotated by the
GENCODE project.  These included proximal promoter regions of known genes
(defined as 100 bp upstream of the transcription start site), three classes
of noncoding RNAs (micro-RNAs [miRNAs], small nucleolar RNAs [snoRNAs], and
large interspersed non-coding RNAs [lincRNAs]), and binding sites of the
GATA2 transcription factor, in which we recently found evidence of both
positive and negative selection \citep{ARBIETAL12}.  We applied \ins to a
high-confidence subset of annotated elements in each of these five classes
(Supplementary Table \ref{tab:element-classes}; see Supplementary Methods).
Our analysis considered various thresholds for distinguishing between low
and high frequency polymorphisms, but our estimates were fairly insensitive
to this threshold (Supplementary Fig. \ref{fig:genomic-thresholds}), so we focus below on
results for the default threshold of 15\%.

All five classes of elements were estimated to have significant fractions
of sites under selection ($\rho>0$; $p\leq 0.01$; Fig.\
\ref{fig:genomic-results}B). The snoRNAs showed the highest estimated value
($\rho=0.46\pm 0.11$), consistent with their essential role in
guiding chemical modifications of ribosomal and transfer RNAs \citep{MATEETAL07,PANGETAL06}. 
miRNAs and GATA2 binding sites also showed estimates of $\rho$
exceeding 0.3, approximately the average for annotated transcription factor
binding sites \citep{ARBIETAL12}.  By contrast, lincRNAs were inferred to
have a much smaller (but still significant) fraction of sites under
selection, consistent with previous observations indicating high levels of
conservation are generally limited to short segments within lincRNAs
\citep{GUTTETAL09,MARQPONT09,ULITETAL11}.
We found significant evidence of weak negative
selection ($\gamma>0$; $p\leq 0.01$) for lincRNAs, snoRNAs, and proximal
promoters, with snoRNAs showing particularly high rates of weakly selected
segregating polymorphisms ($\Ew=1.7\pm 0.6$ polymorphisms per kbp).
Interestingly, only GATA2 binding sites showed significant evidence of
positive selection ($\eta>0$; $p\leq 0.01$) (Fig.\
\ref{fig:genomic-results}B), indicating that negative selection is dominant
for most of these noncoding elements, but at least some classes of
transcription factor binding sites exhibit substantial evidence of recent
adaptation \citep{ARBIETAL12}.  Estimates of $\rho$ naturally depend
on the density of functional sites within each annotation class, and the
reduced estimates
for promoters and lincRNAs likely reflect a relatively low density of
functional nucleotides.


To shed additional light on the manner in which natural selection has
influenced these elements, we performed a more detailed analysis of two
classes of elements, GATA2 binding sites and miRNA primary transcripts.
First, we partitioned the nucleotides in the annotated GATA2 binding sites
into 11 classes, corresponding to the 11 positions in the GATA2 motif, and
applied \ins separately to each class (Fig.\ \ref{fig:genomic-results}C).
This analysis indicated that the effects of natural selection are
concentrated in the seven-nucleotide ``core'' region of the motif (Fig.\
\ref{fig:genomic-results}C); all seven of these positions, and only one
other position, were found to have significant estimates of $\rho$ ($p\leq
0.01$).  Furthermore, it indicated that the signature of positive selection
comes primarily from the 7th and 8th positions, which together contribute a
total posterior expected number of adaptive divergences of 123$\pm$43
across roughly 30,000 binding sites (2.4$\pm$0.7 per kbp).  Interestingly,
these positions (particularly the 8th) are know to play a role in
modulating binding specificity of GATA2 \citep{KOENGE93,MERIORKI93}.  They
are also critical in determining the relative binding affinities of GATA1,
GATA2, and GATA3, which regulate overlapping sets of genes and are known to
serve as ``switches'' between alternative modes of gene expression
\citep{BRESETAL10,DOREETAL12}.  Notably, the 4th and 5th positions in the
core GATA motif showed significant evidence of weak negative selection
($\gamma>0$; $p\leq 0.01$), despite no significant signature of weak
negative selection in the global analysis of the binding site (above).

In our second detailed analysis, we partitioned the nucleotides in the
annotated miRNAs into several structural classes based on predictions of
hairpin secondary structures (see Supplementary Methods), and we applied
\ins separately to each class. We first partitioned the primary miRNA into
loop and stem regions (Fig.\ \ref{fig:genomic-results}D, inset)
, distinguishing between
paired and unpaired bases within the stem.  Among the three partitions,
only paired bases in the stem were estimated to have a significant fraction
of sites under selection ($\rho=0.48\pm 0.06$; $p< 10^{-5}$; Fig.\
\ref{fig:genomic-results}D), consistent with their key role in stabilizing
the hairpin structure.  The estimate for the unpaired stem positions was
particularly low ($\rho=0.15\pm0.12$; $p>0.05$).  These results are
consistent with previous comparative genomic studies in Drosophila
\citep{CLARETAL07,STARETAL07}.  The estimate for the loop region was
surprisingly high ($\rho=0.36\pm 0.22$ $p>0.05$), given that this region has no
known sequence-specific role in miRNA biogenesis, but data for the loop
was somewhat sparse, leading to high variance in the estimate.

Finally, 
we further partitioned the stem into four sub-regions---loop-proximal, lower stem, star, mature---reflecting the cleavage activity
of Drosha and Dicer, the two RNase III cleavage enzymes of primary
importance in miRNA biogenesis.
Estimates of $\rho$ for paired bases in these five regions
(Fig. \ref{fig:genomic-results}D) were generally concordant with previous
comparative analyses and with what is currently known about miRNA
biogenesis and target gene regulation
\citep{LAIETAL03,CLARETAL07,STARETAL07b}.  In particular, the highest
estimate of $\rho$ (0.66$\pm$0.15) corresponds to the 21--22nt mature
(miRNA) region, which has a dual role in structure preservation for
efficient recognition and processing by Drosha and Dicer, and in direct
post-transcriptional regulation of mRNA transcripts.  The lower-stem and
loop-proximal regions had lower 
estimates of $\rho$, probably because they do not serve any direct
regulatory role, but are important in preserving the hairpin structure.
The star region had an intermediate estimate of $\rho$, perhaps because
a fraction of star sequences are loaded into AGO complexes 
and carry out functional roles, even though most are degraded
\citep{OKAMETAL08,CLARETAL07}.
The estimates of $\rho$ obtained using \ins are generally similar to
comparative genomic estimates based on the phyloP program
\citep{POLLETAL09} (Supplementary Fig.\ \ref{fig:mirna-phylop}), but differ
from them in some respects.  For example, \ins finds somewhat weaker
evidence for selection in the star relative to the mature region of the
miRNA than does phyloP.  This difference could reflect a shift toward
weak negative selection in the star region, which is not apparent on
comparative genomic time scales 
because selection is sufficiently strong to prohibit long-term
fixation of derived alleles.

\comment{
In our second detailed analysis, we partitioned the nucleotides in the
annotated miRNAs into five structural regions, based on predictions of
hairpin secondary structures (see Supplementary Methods), and applied \ins
separately to each partition.  These five regions included the (1) loop, (2)
loop-proximal, (3) lower-stem, (4) star, and (5) mature regions of each
miRNA.
This partitioning scheme reflects the cleavage activity of Drosha and Dicer, the
two RNase III cleavage enzymes of primary importance in miRNA biogenesis
(Fig.\ \ref{fig:genomic-results}D, inset \& Supplementary Fig.\ \ref{fig:mirna-components}),
and it has been used in previous
comparative genomic analyses in Drosophila
\citep{LAIETAL03,CLARETAL07,STARETAL07b}. Estimates of $\rho$ for these
five regions (Fig. \ref{fig:genomic-results}D) were generally concordant
with previous comparative analyses and with what is currently known about
miRNA biogenesis and target gene regulation.  In particular, the highest
estimate of $\rho$ (0.63$\pm$0.17) corresponds to the 21--22nt mature
(miRNA) region, which has a dual role in structure preservation for
efficient recognition and processing by Drosha and Dicer, and in the actual
post-transcriptional regulation of mRNA transcripts.  The
smallest estimate of $\rho$ corresponds to the loop region, which does not
have any known function in miRNA biogenesis and does not contain any base
pairing.  The lower-stem and loop-proximal regions had intermediate
estimates of $\rho$, probably because they do not serve any direct
regulatory role, but are important in preserving the pairing relationships
of the hairpin backbone.
The star region showed a slightly higher estimate of $\rho$ than these
regions, perhaps because some star sequences are loaded into AGO complexes
and carry out functional roles, even though most are degraded
\citep{OKAMETAL08,CLARETAL07}.
In addition, we partitioned the stem region of the primary miRNA into
paired and unpaired positions, pooling data across different stem regions.
Consistent with comparative genomic studies \citep{CLARETAL07,STARETAL07},
\ins found strong
evidence of selection in paired bases $\rho=0.46\pm 0.07$; $p<10^{-5}$),
presumably reflecting constraints on the hairpin structure of the miRNAs.
By contrast, unpaired bases as a group showed much weaker evidence of
selection ($\rho=0.17\pm 0.17$; $p=0.8$).  The estimates of $\rho$ obtained using \ins
are generally similar to comparative genomic estimates based on the phyloP
program \citep{POLLETAL09} (Supplementary Fig.\ \ref{fig:mirna-phylop}),
but differ from them in some respects.  For example, \ins finds somewhat
stronger evidence for selection in the mature relative to the star region
of the miRNA than does phyloP.
}

\comment{

DISCUSION POINTS

 - possible analogy of "promoter regions" to linc RNAs known to have subregions
   under constraint but largely showing poor conservation on
   average. Determining function and functional structure still remains a
   challenge, i.e. Guttman careful separation of lincRNAs from transcriptional noise
   -> evidence of conservation

 - pos. sel = 0 in all collections of non-coding elements not surprising, study of
   subclasses or substructre would be of interest, similar to the result of promoter
   regions vs the TFBS of different TFs
 
 - caveats of non-coding collection: importance of annotation and alignment
   results have to be taken with a grain of salt, but show a coarse grained
   agreement with previous estimates, and nicely validated by region specific
   miRNA study showing much promise for the application of insight

}

\section*{Discussion}


Methods based on patterns of divergence between species have become widely
used for identifying and characterizing noncoding functional elements
\citep{MARGETAL03,SIEPETAL05,COOPETAL05,POLLETAL09}, but these methods are
limited by their consideration of relatively long evolutionary time scales
and their sensitivity to alignment errors and other technical artifacts.
The goal of \ins is to shed new light on recent evolutionary patterns by
taking advantage of newly available population genomic data, together with
comparative genomic data for closely related species.  Any inference method
focused on recent evolutionary time must confront the problem that data
describing variation within populations and divergence on short time scales
is necessarily sparse.  \ins addresses this problem by considering
relatively large collections of elements of the same type, and directly
contrasting them with flanking neutral regions.  In this way, it accommodates
differences in mutation rates and genealogical backgrounds across the
genome, and mitigates biases from complex demographic histories.

\ins bears some similarities to McDonald-Kreitman (MK)-based methods
\citep{MCDOKREI91,SMITEYRE02,BIEREYRE04,ANDO05}, Poisson Random Field
(PRF)-based methods \citep{SAWYHART92,BUSTETAL02,BUSTETAL05,WILLETAL05},
and related methods for characterizing the distribution of fitness effects
\citep{EYREETAL06,BOYKETAL08,EYREKEIG09}, but it differs from previous
methods in several important respects.  Unlike MK-based methods, \ins is
based on a full generative probabilistic model, pools information from many
loci in a statistically rigorous manner, and explicitly models weak
negative selection.  Unlike PRF-based methods, it direct 
contrasts 
patterns of polymorphism and divergence in elements of interest with
flanking sites, rather than attempting to model the complex dependency of
absolute allele frequencies on selection coefficients.
\ins
additionally allows for straightforward likelihood ratio tests of various
hypotheses of interest, and it allows parameter variances to be
approximately characterized using standard methods.  For these reasons, we
expect it to be a valuable complement both to existing methods for
analyzing noncoding regions based on long-term evolutionary conservation,
and to methods for analyzing protein-coding sequences based on patterns of
polymorphism and divergence.

Our relatively simple probabilistic model is designed to
exploit newly available genome-scale data sets describing both candidate
functional elements \citep{BERNETAL12,GERSETAL10,ROYETAL10} and variation
within populations \citep{1KGCONS10,MACKETAL12}.  However, a naive approach
to parameter estimation would still be prohibitively CPU-intensive with
genome-wide data.  We achieve major gains in efficiency by decomposing the
inference procedure into three separate steps, concerned with the
estimation of the phylogenetic, neutral, and selection parameters,
respectively.  This decomposition relies on the simplifying assumption that
neutral sites within the elements of interest contain negligible
information about the neutral parameters of the model, because they are
vastly outnumbered by the flanking neutral sites---a property that can
typically be guaranteeed by construction.
It also depends on the use of a single phylogenetic model per locus in
estimating the prior distribution of the ancestral allele at all sites,
which should be adequate as long as relatively close outgroups are used.
Notably, the first two of these steps can be performed in preprocessing and
reused in the analysis of any set of loci that use the same flanking
regions.  Furthermore, the neutral flanks can be designed to maximize the
potential for reuse, as in this work, by defining a set of constant
genomic blocks, and associating each element with the neutral sites
of the nearest block.  This strategy allows the neutral and phylogenetic
parameters to be pre-estimated for each block and reused in any number of
subsequent analyses.  Importantly, these steps dominate the running time of
the inference algorithm
(particularly the phylogenetic estimation step).  The final stage, in which
the parameters $\rho$, $\eta$, and $\gamma$ are estimated, is independent
of the number of genomes considered and typically takes less than a minute.

It is worth emphasizing that \ins can be applied to any collection of
genomic elements, provided each one is sufficiently short that it does not
span regions having markedly different mutation rates or genealogies, and
provided each element can be associated with nearby sites likely to be free
from the effects of selection.  In this paper, we have focused on the case
of genome-wide collections of elements of a particular type, such as miRNAs
or binding sites for a particular transcription factor, but many other
types of analysis are possible.  For example, in related work
\citep{ARBIETAL12}, we have examined various subsets of TFBSs, such as
those associated with genes of a particular Gene Ontology category or
expressed at a various levels, and those having various levels of predicted
binding affinity.  As we have shown, the method can also be applied to
well-defined subsets of positions within elements, such as those
corresponding to particular motif positions or particular miRNA structural
regions.  Similar analyses could be used to contrast regions of the genome
having different epigenomic marks, sequences near to and far from genes,
sequences on sex chromosomes and autosomes, or any number of other
biologically significant genomic partitions.

\ins could be extended in various ways to improve the fit of the model to
the data and broaden the utility of the program.  In this analysis we had a
sufficiently large and complete collection of human variation data to
simply discard positions with missing data in one or more samples.  In
cases of more missing data, however, it may be worthwhile to use the
strategy of adjusting Watterson's constant $a_n$ in the appropriate
conditional distributions (see Table\ \ref{tab:cond-dist}) based on the
number of samples for which data is available at each genomic position.
This simple approach should work well as long as the amount of missing data is not
excessive, but it will require some care in programming to accommodate
site-wise variation in $a_n$ efficiently.  Another useful extension would
be to allow for variation across loci in the global parameters $\rho$,
$\eta$, and $\gamma$, say, by assuming locus-specific parameters are drawn
from Beta (for $\rho$) or Gamma (for $\eta$ and $\gamma$) distributions and
estimating the hyper-parameters for these distributions from the data.
This strategy should improve model fit considerably in cases of variable
selection across loci, similar to phylogenetic models that allow for rate
variation among sites \citep{YANG94}.  A further extension would be to use
a fully Bayesian approach and infer posterior distributions for the
parameters of interest.  This would also be fairly straightforward, but
would most likely require Markov chain Monte Carlo sampling or variation
Bayes approximations.  These and other extensions would help further in
using patterns of polymorphism and divergence to shed light on recent
evolutionary processes, particularly in noncoding regions, and may improve
predictions of the fitness effects of mutations across the genome.

\renewcommand*{\refname}{Literature Cited}

%


\captionsetup{list=no}

\clearpage{}
\section*{Tables}

\begin{table}[!h]
\caption{{\bf Model parameters}}
\noindent \begin{centering}
\begin{tabular}{lll}
\hline
{\bf Parameter} & {\bf Type} & {\bf Description}\\
\hline
$\vect{\lambda^O} = \{\lambda^O_b\}_{b\in B}$ & neutral & Block-specific neutral scaling factor for the outgroup portion of\\
&& the phylogeny, used when computing the prior distributions for\\
&& each deep ancestral alleles, $P(Z_i\ |\ O_i,\lambda^O_b)$\\
$\vect{\lambda} = \{\lambda_b\}_{b\in B}$ & neutral & Block-specific neutral scaling factor for divergence\\
$\vect{\theta} = \{\theta_b\}_{b\in B}$ & neutral & Block-specific neutral polymorphism rate\\
$\vect{\beta}=(\beta_1,\beta_2,\beta_3)$ & neutral & Relative frequencies
of the three derived allele frequency classes, $(0,f)$,\\
&& $[f,1-f]$, and $(1-f,1)$, within neutral polymorphic sites\\
$\rho$ & selection & Fraction of sites under selection within functional elements\\
$\eta$ & selection & Ratio of divergence rate at selected sites to local neutral\\
&& divergence rate\\
$\gamma$ & selection & Ratio of polymorphism rate at selected sites to local neutral\\
&& polymorphism rate\\
\hline
\end{tabular}
\par\end{centering}
\begin{flushleft}
\end{flushleft}
\label{tab:parameters}
\end{table}

\begin{table}[!h]
\caption{{\bf Model variables associated with site $\boldsymbol{i}$}}
\noindent \begin{centering}
\begin{tabular}{lll}
\\[-3ex]
\hline
{\bf Variable} & {\bf Type} & {\bf Description}\\
\hline
$O_{i}$ & observed & Set of aligned bases from outgroup species\\
$X_{i}^{\text{maj}}$ & observed & Base for major allele in target population\\
$X_{i}^{\text{min}}$ & observed & Base for minor allele in target population
(NA for monomorphic sites)\\
$Y_{i}$ & observed & MAF class for site $i$: 
`M' for monomorphic sites (MAF=0)\\
&&~~~~~~~~~~~~~~~~~~~~~~~~~~~~~~~~~~
`L' for polymorphic sites with MAF $<f$ \\
&&~~~~~~~~~~~~~~~~~~~~~~~~~~~~~~~~~~
`H' for polymorphic sites with MAF $\geq f$\\
$S_{i}$ & hidden & Selection class: `$\neut$' for neutral sites\\
&&~~~~~~~~~~~~~~~~~~~~~~~~~ `$\sel$' for sites under selection\\
$Z_{i}$ & hidden & Ancestral allele at the most recent common ancestor (MRCA)
of the target\\
&& population and the closest outgroup \\
$A_{i}$ & hidden & Ancestral allele at the MRCA of samples from the target
population\\ 
\hline
\end{tabular}
\par\end{centering}
\begin{flushleft}
\end{flushleft}
\label{tab:variables}
\end{table}

\begin{table}[!h]
\caption{{\bf Conditional distribution table for $\boldsymbol{P(X_i\ |\
      S_i,Z_i,\ \vect\zeta)}$}}
\label{tab:cond-dist}
\begin{center}
\begin{minipage}{5.5in}
\begin{center}
\begin{tabular}{llll}
\\[-5ex]
\hline
$s$ & $y$ & $z,~x_i^{\text{maj}},~x_i^{\text{min}}$~~[\footnote{Relationships
  among variables.  It is implicit that $x^{\text{maj}} \in$ \{A, C, G,
  T\} and $x^{\text{maj}} \ne x^{\text{min}}$ in all cases.  In addition,
  $x^{\text{min}}=\emptyset$ 
  when $y=\text{M}$}] & $P\left(X_i=(x^{\text{maj}},x^{\text{min}},y)\ |\ S_i=s, Z_i=z,\ \vect\zeta\right)$ \\
\hline
$\neut$ & M & $z = x^{\text{maj}}$ & $(1-\lambda_bt)(1-\theta_ba_n)$\\
$\neut$ & M & $z \neq x^{\text{maj}}$ & $\frac 13\lambda_bt(1-\theta_ba_n)$\\
$\neut$ & L & $z = x^{\text{maj}}$     & $\left((1-\lambda_bt)\beta_1 + \frac 13 \lambda_bt \beta_3\right)\frac13\theta_ba_n$\\
$\neut$ & L & $z = x^{\text{min}}$     & $\left((1-\lambda_bt)\beta_3 + \frac 13 \lambda_bt \beta_1\right)\frac13\theta_ba_n$\\
$\neut$ & L & $z\notin \{x^{\text{maj}},x^{\text{min}}\}$          & $\frac 13 \lambda_bt \left(\beta_1 + \beta_3 \right)\frac13\theta_ba_n$\\
$\neut$ & H & $z\in    \{x^{\text{maj}},x^{\text{min}}\}$          & $\left(1-\lambda_bt + \frac 13 \lambda_bt \right)\beta_2\frac13\theta_ba_n$\\
$\neut$ & H & $z\notin\{x^{\text{maj}},x^{\text{min}}\}$           & $\frac 23 \lambda_bt \beta_2\frac13\theta_ba_n$\\
$\sel$  & M & $z = x^{\text{maj}}$ & $(1-\eta\lambda_bt)(1-\gamma\theta_ba_n)$\\
$\sel$  & M & $z \neq x^{\text{maj}}$ & $\frac 13\eta\lambda_bt$\\
$\sel$  & L & $z = x^{\text{maj}}$     & $(1-\eta\lambda_bt)\frac 13\gamma\theta_ba_n$\\
$\sel$  & L & $z \neq x^{\text{maj}}$                          & $0$\\
$\sel$  & H &  ---                           & $0$\\
\hline
\end{tabular}
\par\end{center}
\end{minipage}
\end{center}
\end{table}

\comment{

REMOVED TFBS TABLE

\begin{table}[!h]
\caption{Selection in human transcription factor binding sites (TFBS)}
\noindent \begin{centering}
\begin{tabular}{lrrrccc}
\\
\hline
TF& BS length & element sites & flanking sites & $\rho$ & $\Ea$ & $\Ew$\\
\hline
BRCA1 & 15 &   142,450 &  73,656,594 & $0.43 \pm 0.04$ $^{***}$ & $0.00 \pm 0.19$~~~ & $0.97 \pm 0.21$ $^{**}$ \\
CTCF  & 13 &   802,273 & 408,024,473 & $0.34 \pm 0.02$ $^{***}$& $0.00 \pm 0.10$~~~ & $1.21 \pm 0.11$ $^{**}$ \\
GATA2 &  7 &   229,781 & 200,350,965 & $0.32 \pm 0.05$ $^{***}$& $1.11 \pm 0.24$ $^{**}$ & $0.86 \pm 0.23$ $^{**}$  \\
SUZ12 & 11 &   137,174 & 183,986,075 & $0.32 \pm 0.09$ $^{*}$~~~ & $0.86 \pm 0.43$~~~ & $1.44 \pm 0.43$ $^{*}$~~  \\
\hline
\end{tabular}
\par\end{centering}
\begin{flushleft}
\begin{small}
$^a$ Number of bases in the motif computed by MEME for the TF.\\
$^b$ Total Number of sites analyzed within binding sites of the TF, after filtering.\\
$^c$ Total number of non-filtered putative neutral sites within 10 kb flanking regions of binding sites.\\
$^d$ Estimates of $\rho$ with curvature-based standard errors.\\
$^e$ Posterior expected values of $\PD$ normalized per 1,000 bp, with curvature-based standard errors.\\
$^f$ Posterior expected values of $\WP$ normalized per 1,000 bp,  with curvature-based standard errors.\\
$^*$ Estimates found to be significantly greater than zero, (* $p=0.01$;~ ** $p=0.001$;~ *** $p=10^{-5}$). 
P-values estimated using $\chi^2$ with three degrees of freedom for testing $\rho>0$
and $\chi^2$ with one degree of freedom for testing $\eta>0$ and $\gamma>0$.
\end{small}
\end{flushleft}
\label{tab:tfbs-estimates}
\end{table}
END COMMENT
}

\clearpage{}
\section*{Figure Legends}

\setlength{\parindent}{0pt} 
\setlength{\parskip}{2ex}


{\bf Figure \ref{fig:model-schematic}. Schematic description of \ins.}
The method measures the influence of natural selection by contrasting
patterns of polymorphism and divergence in a collection of genomic elements
of interest (gold) with those in flanking neutral sites (dark gray).
Nucleotide sites in both elements ($E_b$) and flanks ($F_b$) are grouped
into a series of genomic blocks ($b$) to accommodate variation along the
genome in mutation rates and genealogical backgrounds.  The model consists
of phylogenetic (gray), recent divergence (blue), and intraspecies
polymorphism (red) components, which are applied to genome sequences for
the target population ($X$, red) and outgroup species ($O$, gray).  At each
nucleotide position, the alleles at the most recent common ancestors of the
samples from the target population ($A$) and of the target population and
closest outgroup ($Z$) are represented as hidden variables and treated
probabilistically during inference.  The allele $Z$ determines whether or
not monomorphic sites are considered to be divergent (D).  Polymorphic
sites are classified as having low- (L) or high- (H) frequency derived
alleles based on $A$ and a frequency threshold $f$.  The labels shown here
are based on a likely setting of $Z$ and $A$.  Vertical ticks represent
single nucleotide variants relative to an arbitrary reference.  Inference
is based on differences in the patterns of polymorphism and divergence
expected at neutral and selected sites.

\comment{

OLD CAPTION

{\bf (A)} The model 
consists of three components: a phylogenetic model (gray) that relates the outgroup
sequence data, $\outgroup$, with the deep ancestral genome ,$Z$, a divergence model
(blue) that relates $Z$ with the sequence, $A$,
representing the most recent common ancestor (MRCA) of the population sample
at any given site, and a population
genetic model for polymorphism (red) that relates $A$ with the population sequence data, $\data$. ~
{\bf (B)} The data consists of a collection of sites in functional
elements (set $E$; gold) and putative neutral sites in flanking regions of these elements
(set $F$; dark gray). Variation in rates of polymorphism and divergence
along the genome is considered by grouping
nearby elements and their flanks into a genomic blocks and allowing different neutral
polymorphism and divergence rates at different blocks.
Some sites within each block are masked (light gray).
Sequence data consists of a collection of individual genomes sampled from the target population
and several genomes of outgroup species closely related to the target population.
All genomes are aligned to a reference genome of the target population, and 
single nucleotide differences from that reference are indicated by red ticks.
The outgroup genomes are used to probabilistically infer the ancestral genome, $Z$,
and the MRCA sequence, $A$, which are then used to determine divergent sites and polarize polymorphic
sites. Likely inference of $Z$ and $A$ are depicted in the figure, and 
examples of six sites are given (numbered dotted vertical lines).
Sites 3 and 4 are inferred to be divergent, and sites 1,2, and 6 are low-frequency
polymorphic sites, and site 5 is a high-frequency polymorphic site.
}
\clearpage
{\bf Figure \ref{fig:graphical-model}.  Graphical model for a given
  nucleotide site $\boldsymbol{i}$.}
As in Fig.\ \ref{fig:model-schematic}, the phylogenetic portion of the
model is shown in gray, the divergence component in blue, and the
polymorphism component in red.  Observed variables are represented by solid
circles and hidden variables by empty circles.  The observed alleles in the
target population and outgroups are represented by $X_i$ and $O_i$,
respectively.  $X_i$ is further summarized using a major
($X_{i}^{\text{maj}}$) and minor ($X_{i}^{\text{min}}$) allele, as well as
the minor allele frequency class ($Y_i$; not shown).  The selection class
is denoted 
$S_i$, and the ancestral alleles are denoted $Z_{i}$ and $A_{i}$, as
described in Fig.\ \ref{fig:model-schematic}. Conditional dependence
between the variables is indicated by directed edges, in the standard
manner for probabilistic graphical models.  Model parameters are shown
alongside the associated conditional dependency edges.  The selection
parameters $\vect\zeta_{\text{\sel}}=\left(\rho,\eta,\gamma\right)$ are
highlighted in green.
\clearpage
{\bf Figure \ref{fig:sim-results}. Simulation results.}
{\bf (A)} Parameter estimates for four collections of 20,000 simulated elements
based on different mixtures of neutral (neut), strong positive (SP), strong
negative (SN), and weak negative (WN) selection (as indicated at bottom).
The true values of $\rho$, $\PD$, and $\WP$ are indicated by solid bars,
and estimates from \ins are indicated by diamonds, with error bars
representing one standard error.  For comparison, estimates from several
simpler count-based methods are also shown, including estimates of $\rho$
based on polymorphism ($\hat\rho_\text{Poly}$; `+') and divergence
($\hat\rho_\text{Div}$; solid squares) rates, and estimates of $\PD$ based
on the McDonald-Kreitman framework ($\hat\PD_{\text{-MK}}$; `$\times$').
Adaptive divergences ($\PD$) and deleterious polymorphisms
($\WP$) are shown as rates per 1,000 base pairs (kbp).  See {\bf Methods}
for details.
{\bf (B)} \ins was
applied to 11 collections of 10,000
elements with various fractions of sites under selection (see text),
assuming a range of values for the low-frequency derived allele threshold
$f$.  Shown are (left column) relative estimation errors for $\rho$, $\PD$, and
$\WP$, measured as differences between the estimates and true values
normalized by the true value, and (right column) curvature-based standard
errors (SE) for the estimates, both as a function of the frequency threshold
$f$.  Each boxplot describes the 
distribution of values for the 11 collections considered.
{\bf (C)} Each of the same 11 selection mixtures was combined with
four different demographic scenarios having varying degrees of
complexity (Supplementary Table \ref{tab:demography}).  Box plots represent
the distribution of relative error across the eleven collections for each
demographic scenario. The relative estimation error for the simple
site-count-based estimates, $\hat\rho_\text{Poly}$, $\hat\rho_\text{Div}$,
and $\hat\PD_{\text{-MK}}$ is shown for comparison.
%

\clearpage
{\bf Figure \ref{fig:genomic-results}. Analysis of human genomic elements.} 
{\bf (A)} Distribution of likelihood ratio test statistics for 500 sampled sets
of ``neutral" genomic elements, with $\sim$30,000 elements per set.  Test
statistics reflect a null hypothesis that $\rho=0$ and an alternative
hypothesis that $\rho>0$.  For comparison, a $\chi_3^2$ distribution (with
three degrees of freedom; red) and a 50:50 mixture of a $\chi_3^2$
distribution and a point mass at 0 (green) are also shown.  Blue lines
indicate significance thresholds for $p=0.01$ and $p=0.05$ based on the
$\chi_3^2$ distribution.  Four of the 500 data sets (0.8\%) had test
statistics exceeding the $p=0.01$ cutoff, and 24 (4.8\%) exceeded the
$p=0.05$ cutoff, indicating a reasonably good fit to the tail of the
distribution. The distribution of estimated values of $\rho$ is shown in
Supplementary Figure \ref{fig:neutral-rho-hist}.
{\bf (B)} Model-based estimates of $\rho$, $\Ea$, and $\Ew$ for three classes of
noncoding RNAs (lincRNAs, miRNAs, and snoRNAs), promoter regions, and GATA2
binding sites (see {\bf Methods}).  Error bars indicate one standard error.
For comparison, estimates of $\rho$ based on polymorphism
($\hat\rho_{\text{Poly}}$) and divergence ($\hat\rho_{\text{Div}}$) counts
are also shown.  Symbols in red indicate statistical significance in
likelihood ratio tests for overall selection ($\rho>0$; `*' $\rightarrow$
$p<0.01$), positive selection ($\eta>0$; `p' $\rightarrow$ $p<0.01$), and weak
negative selection ($\gamma>0$; `w' $\rightarrow$ $p<0.01$), based on a
$\chi_3^2$ distribution for $\rho>0$ and a $\chi_1^2$ distribution for
$\eta>0$ and $\gamma>0$.
{\bf (C)} The motif inferred for GATA2 together with position-specific estimates
of $\rho$ (left axis), $\PD$, and $\WP$ (right axis).  Statistical
significance is assessed and indicated as in (B).  The ``core'' seven
positions of the motif, having IC$>\frac 12$, are highlighted in gray.
Note that all seven core positions display significant evidence of
selection. In addition, positions 7 and 8 show significant evidence of
positive selection, and positions 5 and 6 show significant evidence of weak
negative selection.
{\bf (D)} Estimates of $\rho$ for several structural regions of miRNAs (inset).
(Left) Results for a coarse-grained partitioning into
loop bases, unpaired stem bases, and paired stem bases.  (Right) Results
for a finer-grained partitioning of paired
bases in the stem into loop-proximal, lower-stem,
star and mature regions, corresponding to the regions that undergo
cropping and dicing by Drosha and Dicer (dashed lines). Estimates found to
be significantly greater than 0 ($p\leq 0.01$) are highlighted (`*').


\clearpage{}
\section*{Figures}


\begin{figure}[h]
\begin{center}
\includegraphics[height=2.2in]{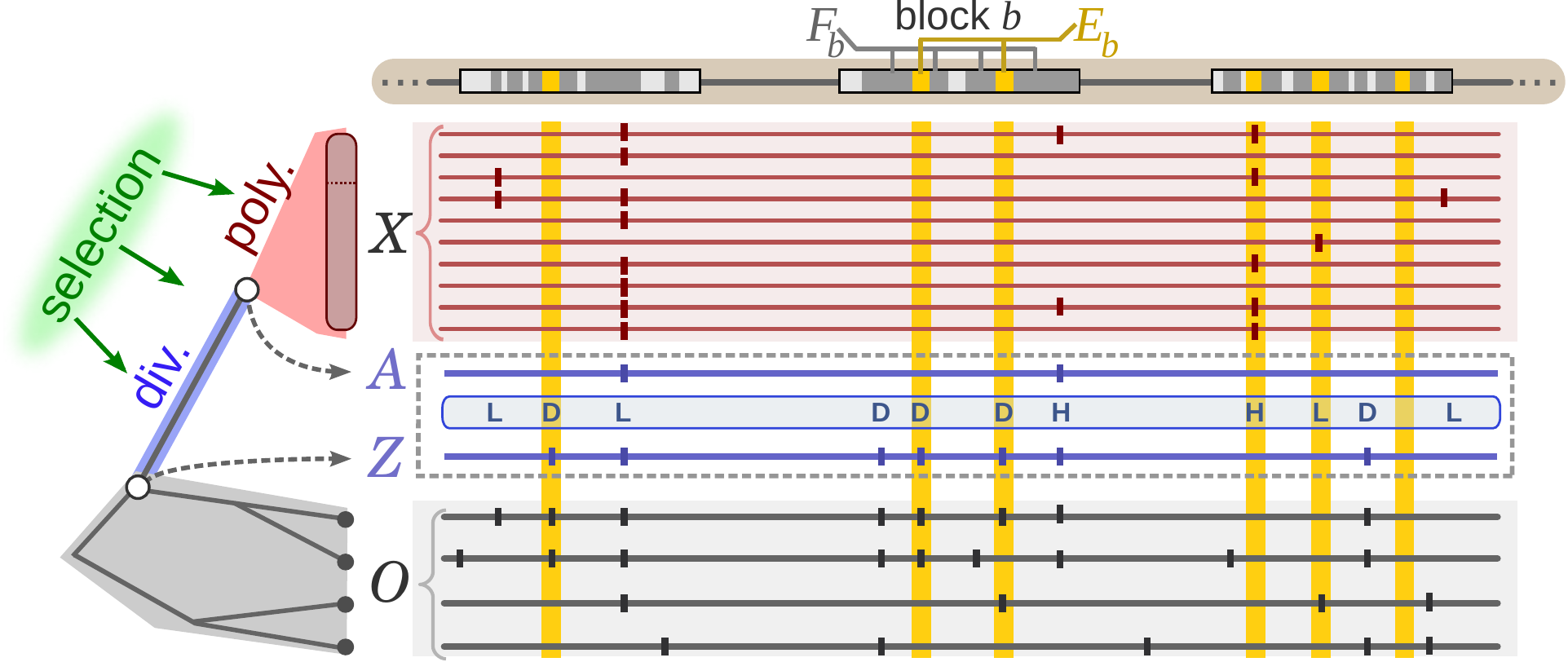}
\caption[Main Figure \ref{fig:model-schematic} -- \ins model schematic]{}
\label{fig:model-schematic}
\end{center}
\end{figure}
\vspace{60pt}
\begin{figure}[h!]
\begin{center}
\includegraphics[height=1.8in]{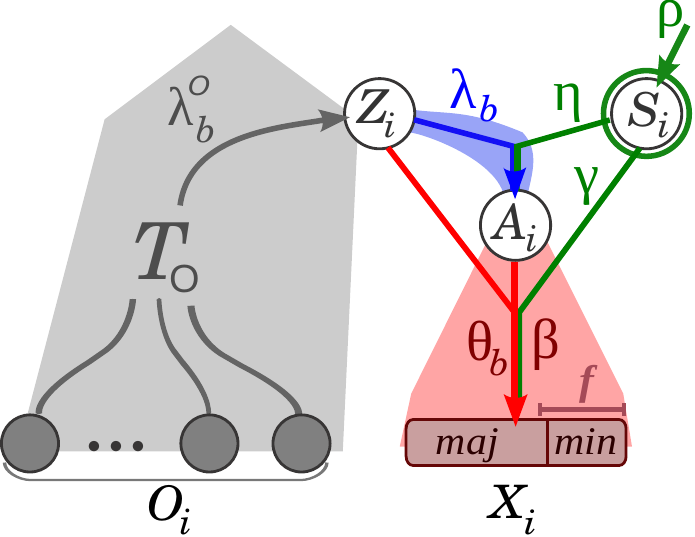}
%
%
\caption[Main Figure \ref{fig:graphical-model} -- \ins Graphical model]{}
\label{fig:graphical-model}
\end{center}
\end{figure}
\begin{figure}[h]
\begin{center}
\includegraphics[width=6.5in]{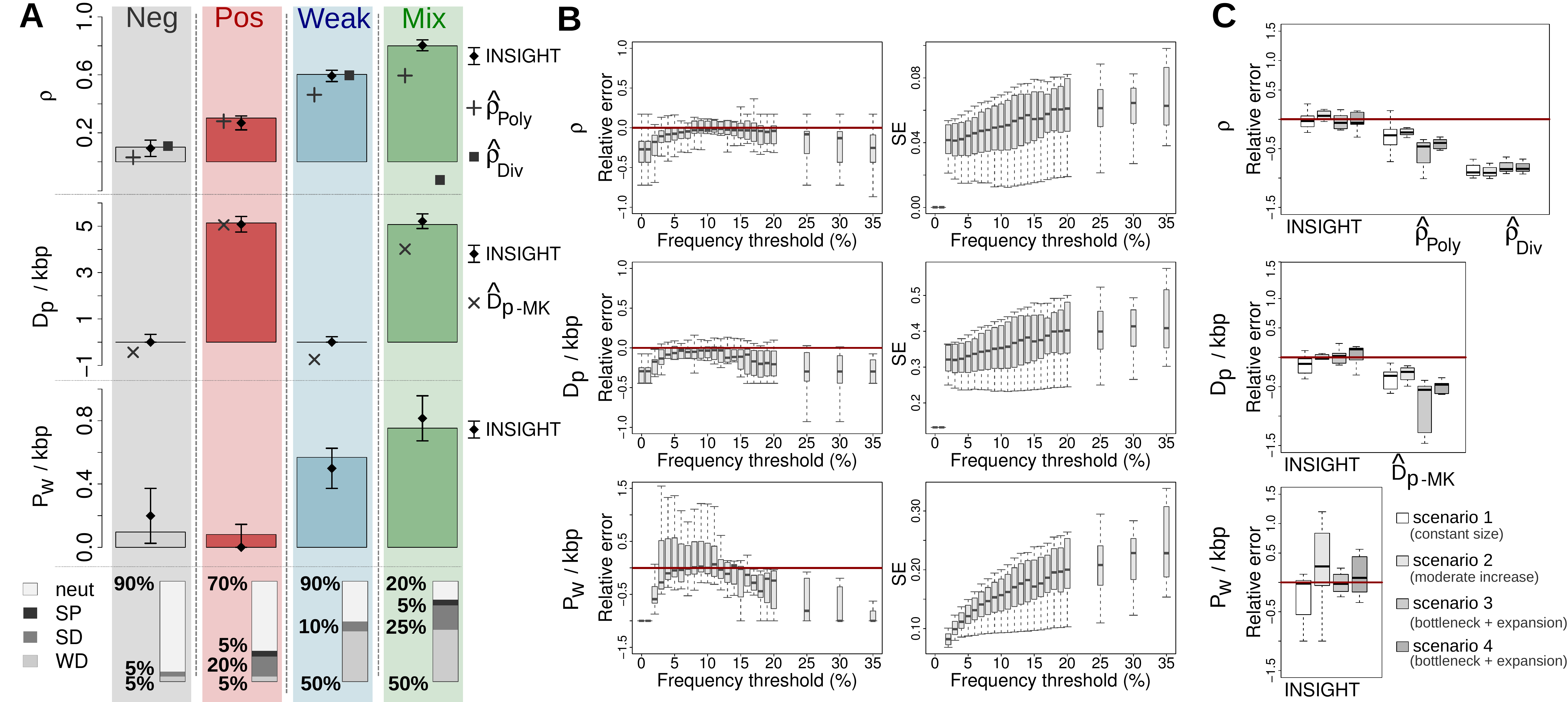}
\caption[Main Figure \ref{fig:sim-results} -- Simulation results]{}
\label{fig:sim-results}
\end{center}
\end{figure}
\begin{figure}[h]
\begin{center}
\includegraphics[width=6.5in]{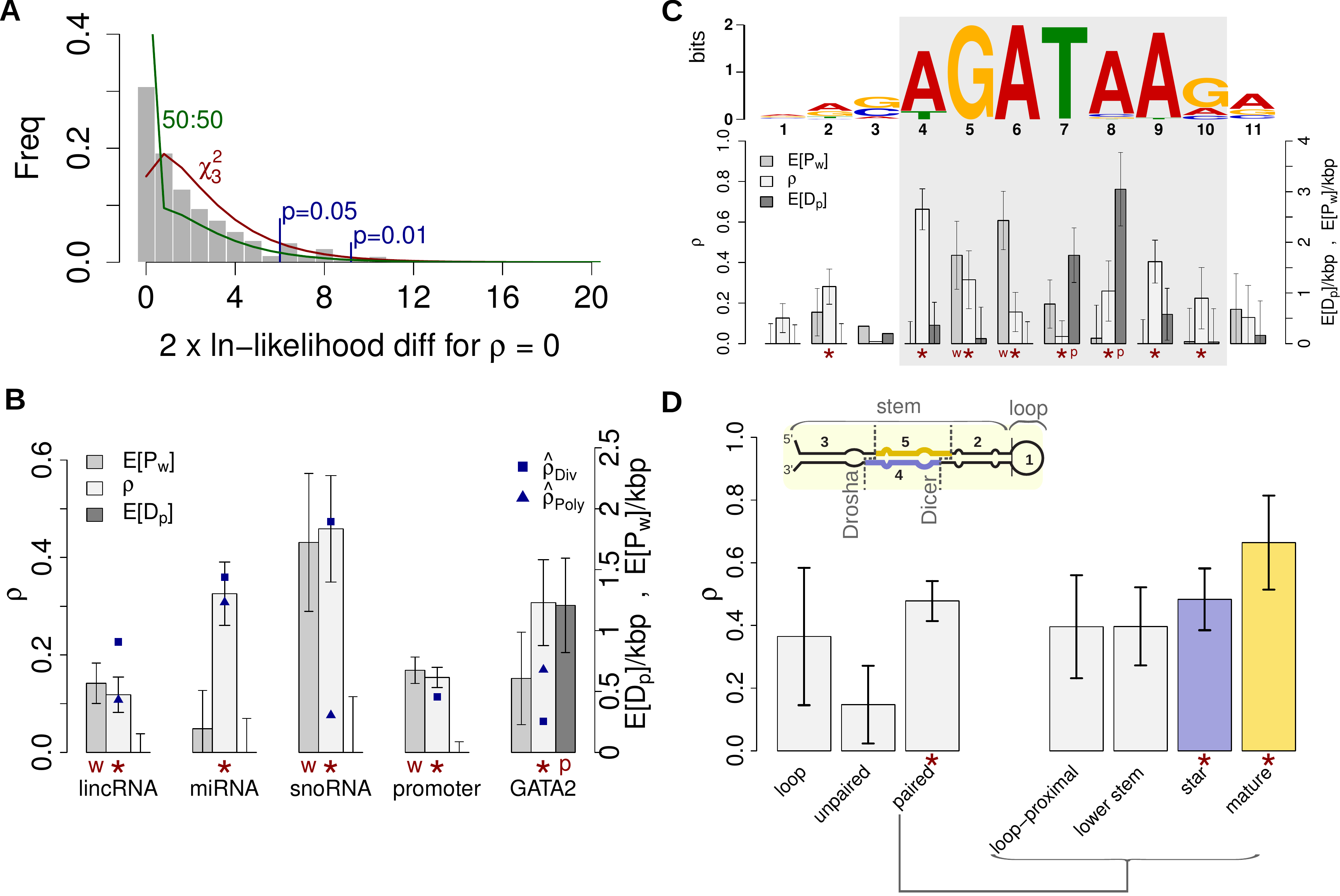}
\caption[Main Figure \ref{fig:genomic-results} -- Data analysis results]{}
\label{fig:genomic-results}
\end{center}
\end{figure}
\captionsetup{list=yes}
\clearpage

\clearpage

\pagestyle{myheadings}
\markright{Supplementary Material}

\oddsidemargin 0in
\evensidemargin 0in
\topmargin -.5in
\textwidth 6.5in
\textheight 9in


\makeatletter
\renewcommand{\@biblabel}[1]{\quad#1.}

\providecommand{\\}{\\}


\renewcommand{\thefigure}{S\arabic{figure}}
\renewcommand{\thetable}{S\arabic{table}}

\makeatother

\renewcommand{\thefootnote}{\textit{\alph{footnote}}}

\date{}



\setcounter{secnumdepth}{-1} 
\setcounter{figure}{0} 
\setcounter{table}{0}

\newcommand{\pXmaj}{{p_i^{maj}}}
\newcommand{\pXmin}{{p_i^{min}}}

\noindent \begin{LARGE}{\\ \bf \underline{Supplementary Material}}\end{LARGE}\\

\renewcommand\listfigurename{Supplementary Figures}
\listoffigures

\renewcommand\listtablename{Supplementary Tables}
\listoftables

\makeatletter
\def\tableofcontents{\section*{Supplementary Methods}\@starttoc{toc}}
\tableofcontents
\makeatother


\clearpage

\section*{Supplementary Tables}


\begin{table}[!h]
\caption[Demographic scenarios used in simulation]
{\bf Demographic scenarios used in simulation}
\noindent \begin{centering}
\begin{tabular}{lcccc}
\\
\hline
Time $^a$& scenario 1 $^b$ & scenario 2 $^c$ & scenario 3 $^d$ & scenario 4 $^d$\\
\hline
220 kya $^e$ & -- & 1.23x $^h$ & 1.23x $^h$ & 1.23x $^h$ \\
140 kya $^f$ & -- &     --   & 0.17x $^h$  & 0.17x $^h$ \\
20.8 kya $^g$ & --&     --   &  0.476x $^h$ & 0.242x $^h$ \\
20.8 kya $^g$ & --&     --   & exp(79.8) $^i$ & exp(109.7) $^i$ \\
\hline
\end{tabular}
\par\end{centering}
\begin{flushleft}
\begin{small}
$^a$ Time of change in $N_e$ (kya = 1,000 years ago). All demographic scenarios start with an ancestral
$N_e$ of 10,000 after divergence from chimpanzee, 6.5 mya. Demographic scenarios follow the model suggested
by \cite{GUTEETAL09}.\\
$^b$ Scenario 1 no demographic changes throughout history.\\
$^c$ Scenario 2 corresponds to the demographic history of an African population with a single
moderate population expansion.\\
$^d$ Scenario 3 \& 4 correspond to the demographic histories of a European and East Asian population, resp., each with a moderate population expansion followed by two population bottlenecks and an exponential expansion.\\
$^e$ Moderate population expansion in the African population ancestral to all current human populations.\\
$^f$ Divergence point of an ancestral Eurasian population from an ancestral African population
associated with a strict population bottleneck in the ancestral Eurasian population.\\
$^g$ Divergence of European and East Asian population associated with additional bottlenecks
in both ancestral populations followed by exponential expansion.\\
$^h$ Instantaneous population size increase or decrease by a given multiplicative factor.\\
$^i$ Exponential population size expansion at a given rate expressed as
$\log(N_e^\text{final} /N_e^\text{initial} ) / $time, where time is in units of $2N_e$ generations.

\end{small}
\end{flushleft}
\label{tab:demography}
\end{table}

\begin{table}[!h]
\caption[Classes of genomic elements]
{\bf Classes of genomic elements analyzed by \ins}
\noindent \begin{centering}
\begin{tabular}{lrrr}
\\
\hline
Class & elements$^a$ & element sites $^b$ & flanking sites$^c$\\
\hline
lincRNAs$^c$ &  3,362 & 323,284 & 1.6 Mb \\
miRNAs$^c$   &  1,323 &  63,543 & 2.5 Mb \\
snoRNAs$^c$  &    416 &  22,331 & 0.3 Mb \\
proximal promoters$^d$ &
           18,453 & 613,339 & 20.3 Mb \\
GATA2 binding sites$^f$ &
          39,535 & 209,065 & 109.0 Mb \\
\hline
\end{tabular}
\par\end{centering}
\begin{flushleft}
\begin{small}
$^a$ Number of distinct elements in class.\\
$^b$ Number of site in entire collection after filtering ($|E|$).\\
$^c$ Number of neutral flanking sites in megabases (Mb) used for neutral inference ($|F|$).\\
$^d$ ``Exon" level transcripts tagged as ``known" in GENCODE v.13.\\
$^e$ Proximal promoters are defined as the 100 bp region upstream the transcription start site
of known genes.\\
$^e$ Binding sites identified in ChIP-seq peaks from ENCODE data on multiple cell lines.
\end{small}
\end{flushleft}
\label{tab:element-classes}
\end{table}

\begin{table}[h]
\caption[Joint Probabilities Used in E Step of EM Algorithm]
{{\bf Joint Probabilities with Data Used in E Step of EM Algorithm}$^a$}
\begin{small}
\begin{tabular}{ll}
\hline
$Y_i$& $q(S_i=\neut)$ \\
\hline
$\text{M}$ & 
$(1-\rho^{(k)})\ \left( (1-\hat\lambda_bt)\pXmaj + \frac 13 \hat\lambda_bt(1-\pXmaj) \ \right)(1-\hat\theta_ba_n)$ \\
$\text{L}$ &
$(1-\rho^{(k)})\left[(1-\hat\lambda_bt)\ \ \left(\hat\beta_1\pXmaj + \hat\beta_3\pXmin\right) + \frac 13\hat\lambda_bt\ \left(\hat\beta_1(1-\pXmaj) + \hat\beta_3(1-\pXmin)\right)\right]\frac 13 \hat\theta_ba_n$ \\
$\text{H}$ & 
$(1-\rho^{(k)})\ \left[(1-\frac 23\hat\lambda_bt)(\pXmaj+\pXmin) ~+~   \frac 23 \hat\lambda_bt(1-\pXmaj-\pXmin) \right]\frac 13 \hat\beta_2\hat\theta_ba_n$ \\ 
\hline
\end{tabular}\vspace{10pt}
\noindent \begin{centering}
\begin{tabular}{llll}
\hline
$Y_i$& $q(S_i=\sel,Z_i=X_i^{maj})$ & & $q(S_i=\sel,Z_i\neq X_i^{maj})$\\
\hline
$\text{M}$ & 
$\rho^{(k)}\ \pXmaj\ (1 - \eta^{(k)} \hat\lambda_b t)\ (1-\gamma^{(k)}\hat\theta_ba_n) $ & &
$\frac 13 \rho^{(k)}\ (1-\pXmaj)\ \eta^{(k)} \hat\lambda_b t$\\
$\text{L}$ & 
$\rho^{(k)}\ \pXmaj\ (1 - \eta^{(k)} \hat\lambda_b t)(\frac 13  \hat\theta_b a_n\gamma^{(k)})\ $ & &
0 \\
$\text{H}$ & 
0 & &
0 \\ 
\hline
\end{tabular}
\par\end{centering}
\end{small}
\begin{flushleft}
\begin{small}
$^a$ The joint probabilities associated with site $i$, 
 $q(S_i=\neut)$, $q(S_i=\sel,Z_i=X_i^{maj})$, and $q(S_i=\sel,Z_i\neq X_i^{maj})$, are defined in
Equations \ref{eqn:qn} - \ref{eqn:qsd}. Values of the selection parameters in the 
$k$th iteration of the EM algorithm are represented by $\rho^{(k)}$, 
 $\eta^{(k)}$, and  $\gamma^{(k)}$, and the previously-estimated (and fixed) neutral model
parameters are represented by $\hat\beta_1, \hat\beta_2, \hat\beta_3, \hat\lambda_bt$, and
$\hat\theta_ba_n$ ($b$ is the genomic block that contains site $i$).
We use the following notation for the deep ancestral priors:
$\pXmaj \equiv P(Z_i=X_i^{maj}\ |\ O_i, \hat\lambda^O_b)$ ; $\pXmin \equiv P(Z_i=X_i^{min}\ |\ O_i, \hat\lambda^O_b)$
\end{small}
\end{flushleft}
\label{tab:em-posterior}
\end{table}

\clearpage

\section*{Supplementary Figures}

\begin{figure}[h!]
\begin{center}
\includegraphics[width=3in]{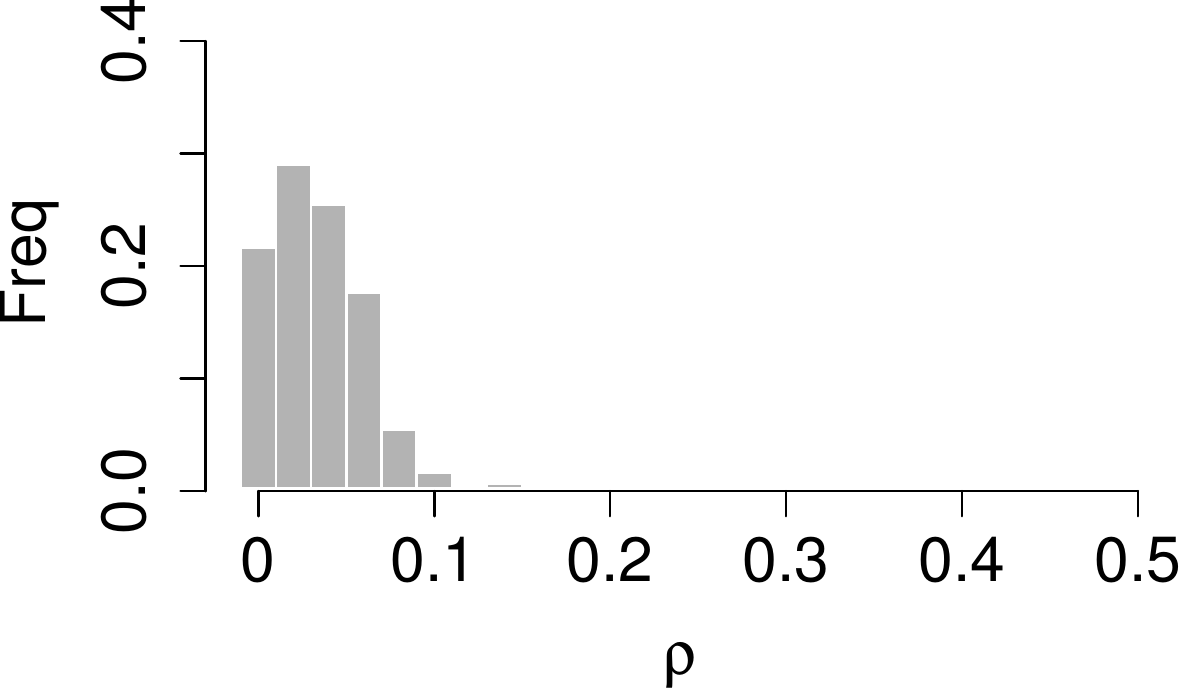}
\caption[Observed distribution of $\rho$ estimates
across 500 ``neutral" collections of elements]
{The observed distribution of $\rho$ estimates
for 500 collections of ``neutral" elements extracted from
putative neutral regions used in our control study (see Fig.\ \ref{fig:genomic-results}A).
Estimated values of $\rho$ had a median of 0.03 and a maximum of 0.17.
}
\label{fig:neutral-rho-hist}
\end{center}
\end{figure}

\clearpage

\begin{figure}[h!]
\begin{center}
\includegraphics[width=5in]{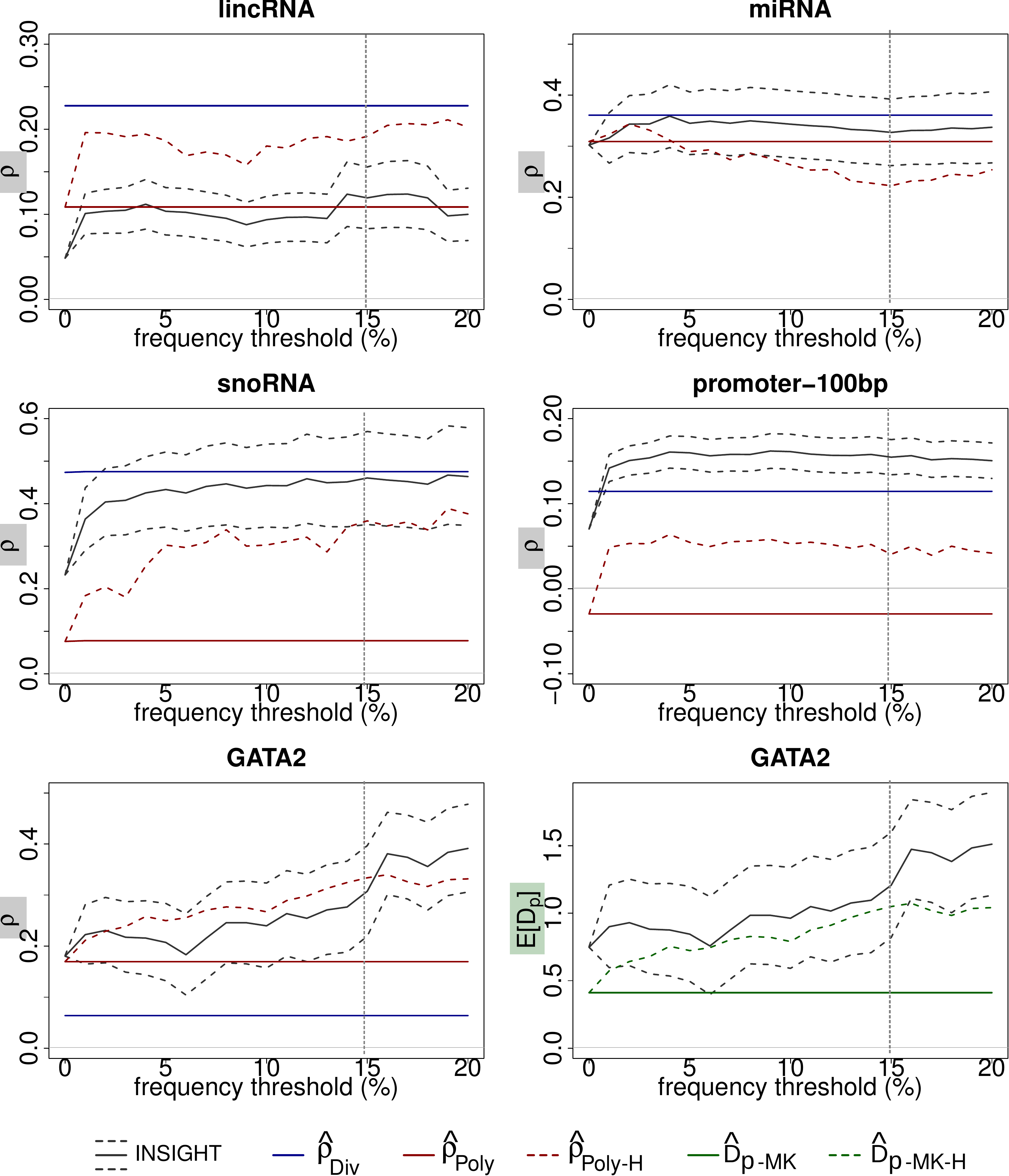}
\caption[Analysis of genomic elements using alternative frequency thresholds]
{Analysis of the five classes of human noncoding genomic elements for alternative
values of the frequency threshold $f$. The \ins estimates (solid lines) are shown
with their curvature-based confidence intervals (dashed lines). A frequency threshold of 15\%
(dotted vertical line) was used in our main analysis. Count-based
estimates for rates of divergent
 ($\hat\rho_{\text{Div}}$; horizontal blue line) and polymorphic 
($\hat\rho_{\text{Poly}}$; horizontal red line) sites
are shown for comparison. Corrected polymorphism-based estimates obtained by discarding
low-frequency polymorphisms ($\hat\rho_{\text{Poly-H}}$; Equation
\ref{eqn:rho-poly-h}; dashed red line) are shown as well.
In addition, estimated posterior expected numbers of adaptive divergences ($\Ea$) are shown for GATA2
(bottom right) as a function of the frequency threshold, together with the MK-based
estimates ($\hat\PD_{\text{-MK}}$; horizontal green) and a corrected MK-based estimates
excluding low-frequency polymorphisms ($\hat\PD_{\text{-MK-H}}$; Equation \ref{eqn:mk-pd-h}; dashed green).
}
\label{fig:genomic-thresholds}
\end{center}
\end{figure}

\clearpage

%
%

%
%

\begin{figure}[h!]
\begin{center}
\includegraphics[width=4in]{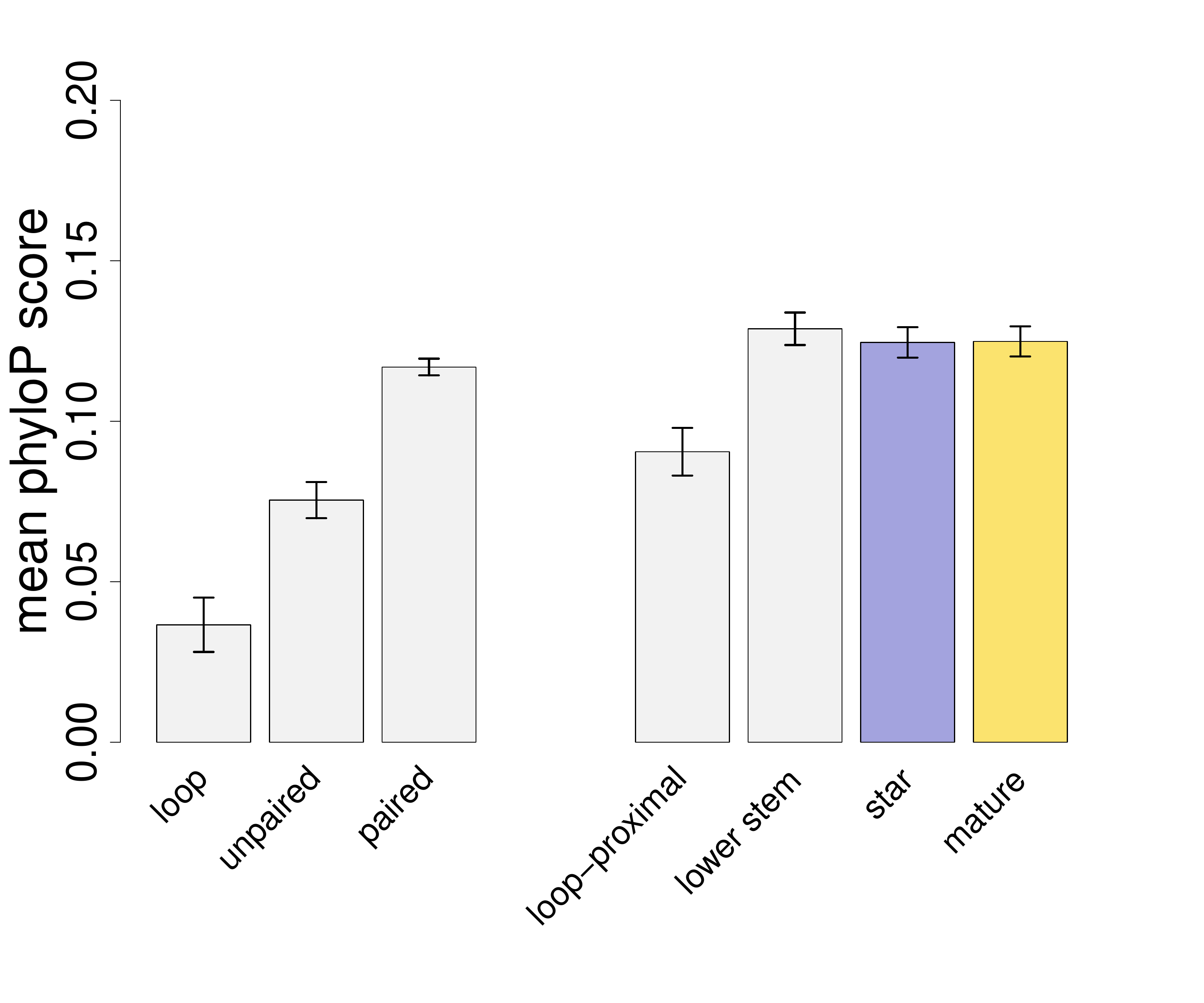}
\caption[Mean conservation scores for five miRNA components]
{Sequence conservation for the five miRNA considered in our analysis
(Fig.\ \ref{fig:genomic-results}D) estimated from a multiple sequence
alignment of eleven primate 
species using phyloP \citep{POLLETAL09}.  The plot represents mean sitewise
phyloP scores with standard errors.
The main trends are consistent with our \ins-based estimates, namely,
higher conservation in paired vs. unpaired bases, and lower conservation in the loop region vs.
the stem region. However, unlike the conservation scores, the \ins-based
estimates of $\rho$ 
indicate stronger selection for the mature strand than for other regions of the
miRNA stem.
}
\label{fig:mirna-phylop}
\end{center}
\end{figure}

\clearpage

\section*{Supplementary Methods}

\subsection{Detailed Inference Algorithm}
\label{supp:inference}

\newcommand{\et}{{e_{\text{x}}}}


Recall that the probabilistic model of \ins has four global parameters (see Table \ref{tab:parameters}):
the fraction of sites under selection in elements ($\rho$), the relative
divergence ($\eta$) and polymorphism ($\gamma$) rates at selected sites,
and $\vect\beta$, a multivariate parameter summarizing the neutral site
frequency spectrum. In addition, each genomic block, $b\in B$ is associated with
three block-specific paramters:
a population-scaled mutation rate ($\theta_b$), a neutral divergence scale
factor ($\lambda_b$), and an outgroup divergence scale factor
($\lambda^O_b$). The objective of the inference algorithm is to estimate approximate MLEs
for all these parameters given the following input:
\begin{enumerate}
\item \label{it:poly} The polymorphism class $Y_i$ and the major and minor alleles $(X_i^{maj},X_i^{min})$
observed across the sampled individuals in the target population, for each site $i$ along the genome.
\item \label{it:out} An alignment of each outgroup species to the reference sequence of the target population (the columns of this multiple genome alignment are represented by $\{O_i\}$ in our graphical model).\item \label{it:tree} A phylogenetic tree $T$ with branch lengths and an instantaneous substitution rate matrix. In the tree $T$, we denote by $\et$ (or {\em target branch}) the terminal branch in $T$ leading to the target population and by $T_{\text{O}}$
the outgroup phylogeny consisting of all branches in $T$ other than $\et$  (see Fig.\ \ref{fig:model-schematic}A).

\item \label{it:flanks} A collection of putative neutral sites $F$.
\item \label{it:elements} A collection of sites within functional elements $E$.
\item \label{it:blocks} A partitioning of the genome into a series of mutually exclusive blocks $B$.
\end{enumerate}

The inference procedure consists of three separate stages: (1) phylogenetic
model fitting, (2) neutral polymorphism model fitting, and (3) selection
inference. The neutral inference, which consists of the first two stages,
makes use of all the above input components other than the collection of
functional element sites $E$. The selection inference stage uses the output
of the first two stages together with the polymorphism data across sites in
$E$. This implies that the neutral inference can be executed without any
knowledge of $E$.  Its output can be stored and later 
contrasted against multiple
collections of functional elements (using a single application of the
selection inference stage for each such collection).  

\subsubsection{Phylogenetic Model Fitting}\label{subsec:phylogenetic-fitting}

The phylogenetic model fitting stage makes use of all input components mentioned above other than (\ref{it:elements}), the collection of element sites.
The phylogenetic model is fitted separately to each genomic block to account for variation in mutation rate
along the genome. For each block $b$, we fit two scaling factors for branch lengths in the phylogenetic tree $T$:
a divergence scaling factor ($\hat\lambda_b$) corresponding to the target branch, $\et$,
and an outgroup scaling factor ($\hat\lambda^O_b$) corresponding to the outgroup
portion of the phylogeny, $T_{\text{O}}$. The two scaling factors are fitted to the outgroup alignments with a single call
to the {\em phyloFit} function from RPHAST \citep{HUBIETAL11}, using a user-specified instantaneous substitution rate
matrix. Both scaling factors are fitted to the multiple sequence alignment of outgroup genomes to the reference sequence
at the putative neutral sites ($F_b$) within block $b$.
Note that obtaining accurate MLEs for all block-specific $\lambda_b$ parameters requires joint estimation with the
global parameters $\beta_1$ and $\beta_3$ (see below). However, since $\beta_1$ and $\beta_3$ affect the likelihood
only at polymorphic sites (specifically with $Y-i=\text{L}$), and those are
typically rare (fewer than 1\% of sites),
a good approximation of the MLEs can be obtained by fitting $\lambda_b$ while considering only monomorphic sites.
This allows us to separately infer each block-specific $\lambda_b$ and to decouple this task from that of inferring
$\beta_1$ and $\beta_3$.

After fitting the two scaling factors, the scaled outgroup phylogeny,
$\hat\lambda^O_b\cdot T_{\text{O}}$, is used to obtain the prior
distributions for the deep ancestral allele, $P(Z_i\ |\
O_i,\hat\lambda_b)$, for each site within block $b$ (not only the sites in
$F_b$). This computation is done by masking out the entire reference
sequence of the target population within the multiple alignment of
outgroups, and by applying the {\em postprob.msa} function in RPHAST to
compute the posterior probability distribution over bases at the ancestral
node at the root of $\et$ for each site along the alignment. It is worth
noting that although this is considered a ``posterior probability'' by
RPHAST, because the reference sequence is masked it is actually the
conditional probability of $Z_i$ given the outgroup genomes only, and
therefore can be considered as a conditional prior distribution in our
model.  The output of the phylogenetic model fitting 
stage is the set of estimated divergence rates, $\{\hat\lambda_bt\}$ ($t$
is the length of $\et$), and the deep ancestral priors across all
non-filtered sites in the genome, which we will denote from this point on
by $\{p(Z_i) \equiv P(Z_i\ |\ O_i,\hat\lambda_b )\}$. The estimates of the
outgroup scaling factors, $\{\hat\lambda^O_b\}$, are not used by any of the
subsequent stages.  Blocks with too few informative sites (100 bp in our
data analysis) are discarded, and element sites within these blocks are
filtered from the analysis.

\subsubsection{Neutral Polymorphism Model Fitting}\label{subsec:neutral-inference}

The neutral polymorphism model fitting stage makes use of input components (\ref{it:poly}),
(\ref{it:flanks}), and (\ref{it:blocks}) mentioned above, as well as the divergence rates,
$\{\hat\lambda_bt\}$, and deep ancestral priors, $\{p(Z_i) \}_{i\in F}$, computed in the previous stage.
This stage estimates the neutral polymorphism parameters $\{\theta_b\}$ and $\vect{\beta}=(\beta_1,\beta_2,\beta_3)$
by maximum likelihood, considering only sequence data in $F$ and conditioning on the previously estimated
divergence scales and ancestral priors. Assuming completely observed model variables, The log-likelihood function
can be expressed as a function of simple site category counts. For this purpose, we use the notation $c_Q(\mathcal{X})$
to indicate the number of sites within subset $Q$ with variable configuration $\mathcal{X}$. Using this notation,
the relevant portion of the likelihood function (Equation \ref{eqn:likelihood}; main text) can be expressed as follows:
\begin{small}
\begin{align}\label{eqn:neut-likelihood}
 \ln\ \left(\ld_F\left(\{\theta_b\},\vect{\beta}\ ;\ \vect{X}_F, \vect{O}_F, \vect{\hat\lambda}\right)\ \right)
& =~ \sum_{b=1}^B\left(c_{F_b}\left(Y_i\neq\text{M}\right) \ln(\theta_b) ~+
c_{F_b}(Y_i=\text{M})\ln(1-\theta_b a_n)
 ~\right)\\
\nonumber
&+~
c_F\left(Y_i=\text{L}, A_{i}=X^{maj}_{i}\right)\ln(\beta_1) ~+~ c_F\left(Y_i=\text{L}, A_{i}=X^{min}_{i}\right)\ln(\beta_3)\\
\nonumber &+~ c_F\left(Y_i=\text{H}\right)\ln(\beta_2)\\
\nonumber &+~~~C~,
\end{align}
\end{small}

\vspace{-5ex}
\noindent
where $C$ represents a term that does not depend on $\beta_1, \beta_2, \beta_3$, or $\{\theta_b\}$. 

Maximum likelihood estimates (MLEs) for $\{\theta_b\}$ and $\beta_2$ depend only on the observed variable
$Y_i$ and can thus be obtained using the following closed-form solutions:
\begin{align}
\label{eqn:ml-theta}\hat\theta_b &= \frac 1{a_n} \ \frac{ c_{F_b}\left(Y_i\neq\text{M}\right) }{|F_b|}~.\\
\label{eqn:ml-beta2}\hat\beta_2 &= \frac{ c_F\left(Y_i=\text{H}\right) }{c_F\left(Y_i\neq\text{M}\right)}
\end{align}
The MLEs for $\beta_1$ and $\beta_3$ depend on counts associated with the hidden variable $A_i$ and are estimated using the following EM algorithm.

\subsubsection*{EM Algorithm for $\boldsymbol{\beta_1}$ and $\boldsymbol{\beta_3}$:}
\begin{description}
\item[Initialization:] initialize the iteration counter $k\leftarrow 0$ and initialize $\beta_1$ and $\beta_3$ as follows: $\beta^{(0)}_1\leftarrow \delta(1-\hat\beta_2)$ and   $\beta^{(0)}_3\leftarrow (1-\delta)(1-\hat\beta_2)$, where $0<\delta<\frac 12$.
\item[Iterate until convergence:] ~
\begin{description}
\item[Expectation:] 
For every site $i\in F$ with $Y_i=\text{L}$, compute the following posterior probability:
\begin{small}
\begin{align}
\label{eqn:qmaj}
p(A_i=X^{maj}_i) &= ~P\left(A_{i}=X^{maj}_{i}\ |\ X_i,\ S_{i}=\text{\neut}, O_{i},\ \hat\theta_b,\hat\lambda_b,\hat\lambda^O_b,\beta^{(k)}_1\right) \\
\nonumber 
&=~\frac{\beta^{(k)}_1\psi\left(p(Z_{i}=X^{maj}_{i}),\hat\lambda_bt\right)}
{\beta^{(k)}_1\psi\left(p(Z_{i}=X^{maj}_{i}),\hat\lambda_bt\right) + \beta^{(k)}_3\psi\left(p(Z_{i}=X^{min}_{i}),\hat\lambda_bt\right)}
~, 
\end{align}
\end{small}

\vspace{-5ex}
\noindent
where $\psi(x,\hat\lambda_bt)=x(1-\hat\lambda_bt) +(1-x)\frac{1}{3}\hat\lambda_bt$
(see Table\ \ref{tab:cond-dist}).
\comment{

OLD POSTERIORS

\begin{align}
\label{eqn:qmaj}
\pmaj &=~ P\left(A_{i}=X^{maj}_{i},\ Y_{i}=\text{L} , \ X^{maj}_{i}, X^{min}_{i}\ |\ S_{i}=\text{\neut}, O_{i},\ \hat\theta_b,\hat\lambda_b,\hat\lambda^O_b,\beta^{(k)}_1\right) \\
\nonumber &=~  \frac 13 \hat\theta_ba_{i}\beta^{(k)}_1\left(p(Z_{i}=X^{maj}_{i})(1-\hat\lambda_bt) +p(Z_{i}\neq X^{maj}_{i})\frac{1}{3}\hat\lambda_bt \right)~. \\
\label{eqn:qmin}
\pmin &=~ P\left(A_{i}=X^{min}_{i},\ Y_{i}=\text{L} , \ X^{maj}_{i}, X^{min}_{i}\ |\ S_{i}=\text{\neut}, O_{i},\ \hat\theta_b,\hat\lambda_b,\hat\lambda^O_b,\beta^{(k)}_3\right)\\
\nonumber &=~  \frac 13 \hat\theta_ba_{i}\beta^{(k)}_3\left(p(Z_{i}=X^{min}_{i})(1-\hat\lambda_bt) +p(Z_{i}\neq X^{min}_{i})\frac{1}{3}\hat\lambda_bt) \right) ~.
\end{align}
\begin{equation}\label{eqn:beta1-exp}
\langle\ c_F(Y_i=\text{L}, A_{i}=X^{maj}_{i})\ \rangle ~=~ \sum_{i\in F\ |\ Y_i=\text{L}}\ \frac{\pmaj}{g+\pmin} ~.
\end{equation}

}

Then use these values to compute the following expected count:
\begin{equation}\label{eqn:beta1-exp}
\langle\ c_F(Y_i=\text{L}, A_{i}=X^{maj}_{i})\ \rangle ~=~ \sum_{i\in F\ |\ Y_i=\text{L}}\ p(A_i=X^{maj}_i) ~.
\end{equation}
\item[Maximization:] maximize the expected log-likelihood function by updating $\beta_1$ and $\beta_3$ as follows:
\begin{eqnarray}
\label{eqn:beta1-max} \beta^{(k+1)}_1 &=& \frac{ \langle\ c_F(Y_i=\text{L}, A_{i}=X^{maj}_{i})\ \rangle }{c_F\left(Y_i\neq\text{M}\right)}~,\\
\label{eqn:beta3-max} \beta^{(k+1)}_3 &=& 1-\hat\beta_2-\beta^{(k+1)}_1~.
\end{eqnarray}
\end{description}
\end{description}

Note that throughout the EM algorithm, the sum of $\beta_1$ and $\beta_3$ remains constant at $ 1-\hat\beta_2$. Also note that the computation in each iteration 
depends on the current values of $\beta_1$ and $\beta_3$, as well as on the
divergence rate parameters
$\{\hat\lambda_b\}$ and deep ancestral state priors $\{p(Z_{i})\}_{i\in F}$ computed in the
phylogenetic model fitting stage. It is, however, independent of 
the estimated polymorphism rate parameters $\{\hat\theta_b\}$, implying that $\vect{\beta}$ and $\{\theta_b\}$ can be estimated in parallel.

\subsubsection{Selection Inference}\label{subsec:selection-inference}

The selection inference stage receives as input the polymorphism data across element sites, $E$, as
well as the ancestral priors and neutral parameter estimates from the previous stages of inference.
This stage estimates the selection parameters $\rho, \eta$, and $\gamma$ by maximizing the log-likelihood function,
conditional on the previously estimated values of the neutral parameters. Assuming completely observed model variables,
The log-likelihood function can be expressed as a function of simple site category counts as follows (see also Equation
\ref{eqn:sel-likelihood}):
\begin{align}\label{eqn:sel-liklihood1}
\ln[\ld(&\rho,\eta,\gamma \ ; \data,\outgroup, \vect{\hat{\zeta}_\neut})]
 =\\
\notag &c_E(S_i=\text{\sel})\ln(\rho) ~+~c_E(S_i=\text{\neut})\ln(1-\rho) ~~+\\
\notag &c_E(S_i=\text{\sel},Z_i\neq X_i^{maj})\ln(\eta) ~+~ \sum_{b\in B}c_{E_b}(S_i=\text{\sel},Z_i= X_i^{maj})\ln(1-\eta\lambda_bt) ~~+ \\
\notag &c_E(S_i=\text{\sel},Y_i=\text{L})\ln(\gamma) ~+~ \sum_{b\in B}c_{E_b}(S_i=\text{\sel},Y_i=\text{M},Z_i= X_i^{maj})\ln(1-\gamma\theta_ba_n) ~~+~~C~,
\end{align}
where $C$ represents a term that does not depend on $\rho,\eta$, or $\gamma$.

Since the selection class $S_i$ is unknown, we use an EM algorithm, as detailed below, to find the MLEs for $\rho, \eta$, and $\gamma$.
Note that in the presence of missing data, $a_n$ varies along the genome, and the
second sum over blocks in Equation \ref{eqn:sel-liklihood1} is broken up into a sum over sites. The same
EM algorithm is applicable in this case, but the function maximized in each update of $\gamma$
would have more terms, and would thus take more time to numerically maximize (see details below).

\subsubsection*{EM Algorithm for $\boldsymbol{\rho, \eta}$ and $\boldsymbol{\gamma}$:}
\begin{description}
\item[Initialization:] initialize the iteration counter $k\leftarrow 0$ and
  select plausible initial values for $\rho, \eta$, and $\gamma$, for instance, $\rho^{(0)}\leftarrow 0.6, ~\eta^{(0)}\leftarrow 1.0$, ~and $\gamma^{(0)}\leftarrow 0.5$.
\item[Iterate until convergence:] ~
\begin{description}
\item[Expectation:] 
For every site $i\in E$, compute the following posterior distributions (see Table\ \ref{tab:em-posterior}):
\begin{align}
\label{eqn:pn}
&\pn &=&~~P(S_{i}=\text{\neut}\ |\ X_{i}, O_{i},\ \vect{\hat\zeta_{\text{\neut}}},\rho^{(k)},\eta^{(k)},\gamma^{(k)})~, \\
\label{eqn:psnd}
&\psnd &=&~~P(S_{i}=\text{\sel}, Z_{i}= X^{maj}_{i}\ |\ X_{i}, O_{i},\ \vect{\hat\zeta_{\text{\neut}}},\rho^{(k)},\eta^{(k)},\gamma^{(k)})~, \\
\label{eqn:psd}
&\psd &=&~~P(S_{i}=\text{\sel}, Z_{i}\neq X^{maj}_{i}\ |\ X_{i}, O_{i},\ \vect{\hat\zeta_{\text{\neut}}},\rho^{(k)},\eta^{(k)},\gamma^{(k)})~,
\end{align}
%
%
\noindent Then compute the following global expected counts:
\begin{eqnarray}
\label{eqn:rho-exp}
\langle\ c_E(S_i=\text{\sel})\ \rangle &=& \sum_{i\in E}\ \left(1-\pn\right) ~,\\
\label{eqn:eta-exp}
\langle\ c_E(S_i=\text{\sel},Z_i\neq X_i^{maj})\ \rangle &=& \sum_{i\in E}\ \psd ~,\\
\label{eqn:gamma-exp}
\langle\ c_E(S_i=\text{\sel},Y_i=\text{L})\ \rangle &=& \sum_{i\in E\ ,\ Y_i=\text{L}}\ \left(1-\pn\right) ~,
\end{eqnarray}
and the following block-specific counts:
\begin{eqnarray}
\label{eqn:eta-exp1}
\langle\ c_{E_b}(S_i=\text{\sel},Z_i= X_i^{maj})\ \rangle &=& \sum_{i\in E_b}\ \psnd ~,\\
\label{eqn:gamma-exp1}
\langle\ c_{E_b}(S_i=\text{\sel},Y_i=\text{M},Z_i= X_i^{maj})\ \rangle &=& \sum_{i\in E_b\ ,\ Y_i=\text{M}}\ \psnd ~.
\end{eqnarray}
\item[Maximization:] maximize the expected log-likelihood by updating $\rho$ as follows:
\begin{eqnarray}
\label{eqn:rho-max} \rho^{(k+1)} &=& \frac{ \langle\ c_E(S_i=\text{\sel})\ \rangle }{|E|}~,
\end{eqnarray}
and updating $\eta$ and  $\gamma$ by {numerically} finding the maxima for the two following functions, respectively:
\begin{small}
\begin{eqnarray}
\label{eqn:eta-max} f_1(\eta) &=& \langle\ c_E(S_i=\text{\sel},Z_i\neq X_i^{maj})\ \rangle\ln(\eta)  ~+~ \sum_{b}\langle\ c_{E_b}(S_i=\text{\sel},Z_i= X_i^{maj})\ \rangle \ln(1-\eta\hat\lambda_bt)~,\\
\label{eqn:gamma-max} f_2(\gamma) &=& \langle\ c_E(S_i=\text{\sel},Y_i=\text{L})\ \rangle\ln(\gamma) ~+~ \sum_{b}\langle\ c_{E_b}(S_i=\text{\sel},Y_i=\text{M},Z_i= X_i^{maj})\ \rangle\ln(1-\gamma\hat\theta_ba_n)~.~~~~~~~~~~
\end{eqnarray}
\end{small}
\end{description}
\end{description}

Due to variation in divergence and polymorphism rates across genomic blocks, the M-step updates for $\eta$ and $\gamma$ require numerical optimization. This optimization procedure uses standard
techniques for optimization of convex functions (see {\bf Procedure for Numerical Optimization} below).

The description of the EM algorithm is finalized by presenting the formulas used for the computation of
three posterior distributions, $\pn$, $\psnd$, and $\psd$, used in the E step of each iteration
(Equations \ref{eqn:pn}--\ref{eqn:psd}). In order
to compute these posteriors, we compute the joint distribution with the data for each of these three
variable configurations, and then normalize these joint distributions in the appropriate way. We define the
following notation for these joint distributions:
\begin{small}
\begin{align}
\label{eqn:qn}
&q(S_i=\neut) &=&~~P(S_{i}=\text{\neut}\ ,\ X_{i}, O_{i},\ \vect{\hat\zeta_{\text{\neut}}},\rho^{(k)},\eta^{(k)},\gamma^{(k)})~. \\
\label{eqn:qsnd}
&q(S_i=\sel,Z_i = X_i^{maj}) &=&~~P(S_{i}=\text{\sel}, Z_{i}= X^{maj}_{i}\ ,\ X_{i}, O_{i},\ \vect{\hat\zeta_{\text{\neut}}},\rho^{(k)},\eta^{(k)},\gamma^{(k)})~. \\
\label{eqn:qsd}
&q(S_i=\sel,Z_i \neq X_i^{maj}) &=&~~P(S_{i}=\text{\sel}, Z_{i}\neq X^{maj}_{i}\ ,\ X_{i}, O_{i},\ \vect{\hat\zeta_{\text{\neut}}},\rho^{(k)},\eta^{(k)},\gamma^{(k)})~.\\
\notag
&q_{total} &=&~~ P(X_{i}, O_{i},\ \vect{\hat\zeta_{\text{\neut}}},\rho^{(k)},\eta^{(k)},\gamma^{(k)})\\
&&=&~~q(S_i=\neut) ~+~ q(S_i=\sel,Z_i = X_i^{maj}) ~+~ q(S_i=\sel,Z_i \neq X_i^{maj})~.
\label{eqn:qTotal}
\end{align}
\end{small}

Expressions for $q(S_i=\neut)$, $q(S_i=\sel,Z_i = X_i^{maj})$, and $q(S_i=\sel,Z_i \neq X_i^{maj})$
are given in Table\ \ref{tab:em-posterior}, and each of the three posterior distributions, $\pn$, $\psnd$,
and $\psd$, is obtained by
normalizing the appropriate joint probability by the total probability associated with site $i$:
$p(\chi) = q(\chi)/q_{total}$.

\subsubsection{Procedure for Numerical Optimization of ``Sum of Logs" Functions}\label{sec:num-opt}

The update steps for $\eta$ and $\gamma$ in the EM algorithm require finding the maximum of a function
that is the following sum of log terms:
\begin{equation}\label{eqn:sum-logs}
f(x) ~~=~~ c_0\ln(x) + \sum_{i=1}^Kc_i\ln(1-w_ix)~~.
\end{equation}

This {\em sum-of-logs} function has a single parameter $x$ and a series
of $2K+1$ positive arguments: $\{c_i\}_{i=0}^K$ and  $\{w_i\}_{i=1}^K$. We assume that the weight arguments $\{w_i\}$ are
distinct and denote by $w_{\min}$ and $w_{\max}$ the minimum and maximum weights (if $w_{\min}=w_{\max}$, then $K=1$).
Note that $f(x)$ is defined in the open interval $(0,\frac 1{w_{\max}})$, and is a concave function in that interval
(since $\ln(a+bx)$ is a concave function for any choice of $a$ and $b$ and the sum of concave functions is concave as well).
Therefore, it has a unique local maximum within the interval $(0,\frac 1{w_{\max}})$, which could be found by any standard
greedy method for optimization.
The sum-of-logs function has well-defined derivatives that are simple to compute and can be used to aid the optimization
procedure. The $n$th derivative of the sum-of-logs function is defined as follows (for $n\geq 1$):
\begin{equation}\label{eqn:sum-logs-deriv}
f^{(n)} ~~=~~ -(n-1)!\left( c_0\left(-\frac{1}{x}\right)^n +  \sum_{i=1}^Kc_i\left(\frac{w_i}{1-w_ix}\right)^n\right)~~.
\end{equation}

In our implementation, we find the maximum of the function $f$ by finding the root of $f'$ without
directly evaluating the function $f$ itself. This approach expedites optimization, since the derivatives of $f$
take less time to compute than $f$ itself due to the overhead required for evaluating the $\log$ function. 
The optimization procedure is further expedited by using upper and lower bounds 
on the root of $f'$, as described below.
\begin{lemma}\label{lem:opt-bounds}
Let $f()$ be the sum-of-logs function specified in Equation \ref{eqn:sum-logs},
and let $w_{\min}$ and $w_{\max}$ denote the minimum and maximum weights in $\{w_i\}_{i=1}^K$.
Then the unique root
of $f'$ lies within the interval $[l,u]$, where 
\begin{equation}\label{eqn:opt-bounds}
l~=~\frac{c_0}{c_0w_{\max}+\sum_{i=1}^K c_iw_i}~~~~~,~~~~~u~=~\frac{c_0}{c_0w_{\min}+\sum_{i=1}^K c_iw_i} ~~.
\end{equation}
\end{lemma}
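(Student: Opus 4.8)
The plan is to locate the unique root $x^\ast$ of $f'$ by converting the stationarity condition into two one-sided estimates, each obtained by replacing the weights $w_i$ appearing in the denominators with the extreme weights $w_{\min}$ or $w_{\max}$.

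First I would write the derivative explicitly,
\[
f'(x) ~=~ \frac{c_0}{x} ~-~ \sum_{i=1}^K \frac{c_i w_i}{1-w_i x},
\]
which is well defined on $(0,\tfrac{1}{w_{\max}})$. The concavity of $f$ on this interval (already noted above) makes $f'$ strictly decreasing; since $f'(x)\to+\infty$ as $x\to 0^+$ (from the $c_0/x$ term) and $f'(x)\to-\infty$ as $x\to (\tfrac{1}{w_{\max}})^-$ (from the term whose weight equals $w_{\max}$), there is exactly one root $x^\ast$, and it satisfies
\[
\frac{c_0}{x^\ast} ~=~ \sum_{i=1}^K \frac{c_i w_i}{1-w_i x^\ast},
\qquad\text{equivalently}\qquad
c_0 ~=~ x^\ast \sum_{i=1}^K \frac{c_i w_i}{1-w_i x^\ast}.
\]

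The key step is to bound the summation from both sides. Because $0 < x^\ast < \tfrac{1}{w_{\max}}$, every denominator is positive, and $w_{\min}\le w_i\le w_{\max}$ gives the sandwich
\[
1 - w_{\max} x^\ast ~\le~ 1 - w_i x^\ast ~\le~ 1 - w_{\min} x^\ast,
\]
with all three quantities positive. Substituting the left inequality into the denominators (which \emph{enlarges} each term) yields $c_0 \le \dfrac{x^\ast \sum_i c_i w_i}{1 - w_{\max} x^\ast}$; clearing the denominator and solving for $x^\ast$ gives $x^\ast \ge l$. Substituting the right inequality instead (which \emph{shrinks} each term) yields $c_0 \ge \dfrac{x^\ast \sum_i c_i w_i}{1 - w_{\min} x^\ast}$, and the same rearrangement gives $x^\ast \le u$. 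Together these establish $x^\ast \in [l,u]$.

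The only point requiring care is the bookkeeping on inequality directions: replacing $w_i$ by $w_{\max}$ shrinks the denominator and hence enlarges the fraction, so it produces an upper bound on the right-hand sum and therefore a \emph{lower} bound on $x^\ast$, and symmetrically for $w_{\min}$. I would also confirm that $1 - w_{\min} x^\ast > 0$ throughout (which holds since $x^\ast < \tfrac{1}{w_{\max}} \le \tfrac{1}{w_{\min}}$), so that no sign reversal occurs when clearing denominators. Beyond this elementary algebra no analytic difficulty remains.
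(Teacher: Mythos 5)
Your proof is correct and rests on the same key inequality as the paper's: bounding $\sum_{i=1}^K c_i w_i/(1-w_i x)$ above and below by $\bigl(\sum_{i=1}^K c_i w_i\bigr)/(1-w_{\max}x)$ and $\bigl(\sum_{i=1}^K c_i w_i\bigr)/(1-w_{\min}x)$, with the same algebra after clearing denominators. The only difference is direction of presentation: the paper verifies the sign conditions $f'(l)\geq 0$ and $f'(u)\leq 0$ at the stated endpoints and invokes monotonicity of $f'$, whereas you apply the identical bound at the unknown root $x^\ast$ of the stationarity equation and solve the resulting inequalities for $x^\ast$.
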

\begin{proof}
The lemma is proven by showing that $f'(l)\geq 0$ and $f'(u)\leq 0$. Let us denote $A=\sum_{i=1}^K c_iw_i$. Then, 
\begin{small}
\begin{align*}
f'(l) ~~&=~~ \frac{c_0}{l} -  \sum_{i=1}^K\frac{c_iw_i}{1-w_il} ~~=~~
c_0w_{\max}+A ~-~ \sum_{i=1}^K\frac{c_iw_i}{1-w_il}\\
&\geq~~ c_0w_{\max}+A ~-~ \frac{A}{1-w_{\max}l}~~=~~
c_0w_{\max}+A ~-~ \frac{A}{1-\frac{c_0w_{\max}}{c_0w_{\max}+A}}\\
&=~~ c_0w_{\max}+A ~-~ \frac{A}{\frac{A}{c_0w_{\max}+A}} ~~=~~
c_0w_{\max}+A ~-~ (c_0w_{\max}+A)\\
&=~~0~~.
\end{align*}
\end{small}

\vspace{-5ex}
\noindent
Similarly,
\begin{small}
\begin{align*}
f'(u) ~~&=~~ \frac{c_0}{u} -  \sum_{i=1}^K\frac{c_iw_i}{1-w_iu} ~~=~~
c_0w_{\min}+A ~-~ \sum_{i=1}^K\frac{c_iw_i}{1-w_iu}\\
&\leq~~ c_0w_{\min}+A ~-~ \frac{A}{1-w_{\min}u}~~=~~
c_0w_{\min}+A ~-~ \frac{A}{1-\frac{c_0w_{\min}}{c_0w_{\min}+A}}\\
&=~~ c_0w_{\max}+A ~-~ \frac{A}{\frac{A}{c_0w_{\min}+A}} ~~=~~
c_0w_{\min}+A ~-~ (c_0w_{\min}+A)\\
&=~~0~~.
\end{align*}
\end{small}
\end{proof}

\subsection{Dealing with Missing Data}

The probabilistic nature of our method makes it fairly easy to address the issue of missing sequence data.
Missing genotypes in the outgroup alignments are dealt with in the phylogenetic model fitting stage by
masking them with `N's in the standard way. Missing data in the individual genomes sampled
from the target population could be accommodated using two different approaches.
If missing data is sufficiently sparse, it is reasonable to discard sites with missing genotypes.
This is the approach we took in our data analysis, since the Complete Genomics individual genomes have
high confidence genotypes for $\sim$90\% of the human reference genome, and more than 75\% of the reference genome
is covered by high confidence genotypes in {\em all} 54 individuals.
However, with other data sets, it might be desirable to accommodate
sites with moderate amounts of missing genotypes. The relationship between the number samples ($n$)
and the probability of observing a polymorphism at a given site is represented in our model
by the multiplicative factor $a_n$ (see Equations \ref{eqn:poly-neut}-\ref{eqn:poly-sel} in main text).
This factor, introduced by \cite{WATT75}, corresponds to the mean total branch length (scaled by population size) 
of a genealogy with $n$ terminal branches: $a_n = \sum_{k=1}^{n-1}1/k$.
Note that if the number of sampled genomes, $n$, is constant across all sites, the factor $a_n$
serves as a constant scaling factor and its value is of no real consequence
in the inference procedure (since the multiplicative factor $\theta$ is
estimated from the data). 
However, if site $i$ has a small number of samples ($m_i$) with missing genotypes, the conditional distribution
$P\left(X_i \ |\ S_i, A_i, Z_i,\  \vect\zeta\right)$ at that site can be adjusted by replacing
the factor $a_n$ with $a_{n_i} = a_{(n-m_i)}$.

This fairly simple adjustment of the model requires several straightforward modifications
in the inference procedure. First, the neutral portion of the likelihood function is adjusted
as follows (see Equation \ref{eqn:neut-likelihood}):
\begin{small}
\begin{align}\label{eqn:neut-likelihood-missing}
 \ln\ \left(\ld_F\left(\{\theta_b\},\vect{\beta}\ ;\ \vect{X}_F, \vect{O}_F, \vect{\hat\lambda}\right)\ \right)
& =~ \sum_{b=1}^B\left(c_{F_b}\left(Y_i\neq\text{M}\right) \ln(\theta_b) ~+
\sum_{i\in F_b} c_{\{i\}}(Y_i=\text{M})\ln(1-\theta_b a_{n_i})
 ~\right)\\
\nonumber
&+~
c_F\left(Y_i=\text{L}, A_{i}=X^{maj}_{i}\right)\ln(\beta_1) ~+~ c_F\left(Y_i=\text{L}, A_{i}=X^{min}_{i}\right)\ln(\beta_3)\\
\nonumber &+~ c_F\left(Y_i=\text{H}\right)\ln(\beta_2)\\
\nonumber &+~~~C~,
\end{align}
\end{small}

\vspace{-5ex}
\noindent
implying that the MLEs of the block-specific polymorphism rates, $\theta_b$, are obtained by maximizing
the following function:
%
%
\begin{align}\label{eqn:neut-likelihood-missing}
f(\theta_b) ~=~ c_{F_b}\left(Y_i\neq\text{M}\right) \ln(\theta_b) ~+~
\sum_{i\in F_b,Y_i=\text{M}} \ln(1-\theta_b a_{i})~.
\end{align}
%
%
Notice that $f(\theta_b)$ is a ``sum-of-logs" function, and can be numerically optimized using
straightforward methods, as described above.

The selection portion of the likelihood is also affected by this adjustment to missing data
in the following way (see Equation \ref{eqn:sel-liklihood1}):
\begin{align}\label{eqn:sel-liklihood-missing}
\ln[\ld(&\rho,\eta,\gamma \ ; \data,\outgroup, \vect{\hat{\zeta}_\neut})]
 =\\
\notag &c_E(S_i=\text{\sel})\ln(\rho) ~+~c_E(S_i=\text{\neut})\ln(1-\rho) ~~+\\
\notag &c_E(S_i=\text{\sel},Z_i\neq X_i^{maj})\ln(\eta) ~+~ \sum_{b\in B}c_{E_b}(S_i=\text{\sel},Z_i= X_i^{maj})\ln(1-\eta\lambda_bt) ~~+ \\
\notag &c_E(S_i=\text{\sel},Y_i=\text{L})\ln(\gamma) ~+~ 
\sum_{b\in B}\sum_{i\in E_B}c_{i}(S_i=\text{\sel},Y_i=\text{M},Z_i= X_i^{maj})\ln(1-\gamma\theta_ba_{n_i}) ~~+~~C~,
\end{align}
and the selection parameter $\gamma$ is updated in each step of the EM algorithm by maximizing the following function
of the expected counts:
\begin{small}
\begin{eqnarray}
\label{eqn:gamma-max-missing} f_2(\gamma) &=& \langle\ c_E(S_i=\text{\sel},Y_i=\text{L})\ \rangle\ln(\gamma) ~+~ 
\sum_{b}\sum_{i\in E_b}\langle\ c_{E_b}(S_i=\text{\sel},Y_i=\text{M},Z_i= X_i^{maj})\ \rangle\ln(1-\gamma\hat\theta_ba_{n_i})~.~~
\end{eqnarray}
\end{small}

\vspace{-5ex}
\noindent
As in the case without missing data (see Equation \ref{eqn:gamma-max}), this is a ``sum-of-logs"
function that can be easily be maximized using standard numerical optimization techniques. However,
the function in the case of missing data potentially has more terms (depending on how many unique values of $n_i$
there are in each block), which would make the optimization slower. Nonetheless, small amounts of missing data
should result in no more than a moderate increase in running time.

\subsection{Estimating Approximate Standard Errors Using the Curvature Method}
\label{supp:fisherCI}

We implemented a method that uses the curvature of the likelihood function at the estimated point of
MLE in order to derive approximate standard errors for the estimates of the three selection parameters---$\rho$, $\eta$, and $\gamma$. This approach, sometimes referred to as the
``curvature method" \citep{LEHMCASE98}, produces a 3$\times$3 variance/covariance matrix for $\rho$, $\eta$, and
$\gamma$ by inverting an estimated Fisher information matrix (FIM) obtained by
negating the 3$\times$3 Hessian of the log-likelihood function (the matrix of partial second derivatives).
Formally, denote $p_1=\rho$, $p_2=\eta$, and $p_3=\gamma$, then the variance/covariance matrix
is approximated by $V=-H^{-1}$, where the Hessian, $H$, is defined as follows:
\begin{equation}\label{eqn:hess}
H= \left[\frac{\partial^2\left(\ln\left[\ld(p_1,p_2,p_3 \ ; \data,\outgroup, \vect{\hat{\zeta}_\neut})\right]\right)}{\partial p_j\partial p_k}\right]_{j,k\in \{1,2,3\}}~.
\end{equation}

The analytical computation of the Hessian at a given point is detailed later in this section.
The standard errors of $\rho$, $\eta$, and $\gamma$ are defined as the square root of the appropriate
diagonal elements of $V$. Standard errors of expected posterior counts, such as $\Ea$ and $\Ew$, are
derived by using an additional approximation.
For $\Ea$, we use use
the approximation $\Ea \approx \rho \eta \bar \lambda t$,
where $\bar \lambda$ is a weighted average of all $\lambda_b$ values (weighted according to $|E_b|$).
This approximation is derived by summing over all sites in $i\in E$ the probability, $\rho \eta \lambda_b t$, of there being a divergence under selection at site $i$ (see Table\ \ref{tab:cond-dist}). Using
a first-order Taylor approximation \citep{OEHL92}, we then estimate the variance of $\Ea$ to be:
\begin{equation}
\text{Var}[\ \Ea\ ] \approx 
(\eta\rho\bar\lambda t)^2\left(\frac{\text{Var}[\rho]}{\rho^2} + 
\frac{\text{Var}[\eta]}{\eta^2} + 
2\frac{\text{Cov}[\rho,\eta]}{\rho\eta}\right)~,
\end{equation}
where $\text{Var}[\rho]=V_{1,1}$ and $\text{Var}[\eta]=V_{2,2}$, and
$\text{Cov}[\rho,\eta]=V_{1,2}$.

The variance of $\Ew$ can be
approximated by a similar, but slightly more complex, calculation
based on the approximation
$\Ew \approx \rho \gamma a_n(\bar\theta- \eta\overline{\lambda\theta} t)$,
where $\bar \theta$ is a weighted average of all $\theta_b$ values and
$\overline{\lambda\theta}$ is a weighted average of all products
$\lambda_b\theta_b$.
Similarly to the approximation of $\Ea$, this approximation is derived by summing over all sites in
$i\in E$ the probability, $\rho \gamma(1-\eta\lambda_bt) a_n\theta_b$, of there being a polymorphism under selection at site $i$ (see Table\ \ref{tab:cond-dist}). The variance of $\Ew$
is then approximated as follows:
\begin{small}
\begin{align}
\text{Var}[\ \Ew\ ] &~\approx ~
(\rho \gamma a_n(\bar\theta- \eta\overline{\lambda\theta} t))^2 \ \times\\
&\notag \left(
\frac{\text{Var}[\rho]}{\rho^2} - 
\left(\frac{\overline{\lambda\theta}t}{\bar\theta- \eta\overline{\lambda\theta} t}\right)^2\text{Var}[\eta] + 
\frac{\text{Var}[\gamma]}{\gamma^2} - 
2\frac{\overline{\lambda\theta}t\ \text{Cov}[\rho,\eta]}{\rho(\bar\theta- \eta\overline{\lambda\theta} t)}+
2\frac{\text{Cov}[\rho,\gamma]}{\rho\gamma}-
2\frac{\overline{\lambda\theta}t\ \text{Cov}[\eta,\gamma]}{\gamma(\bar\theta- \eta\overline{\lambda\theta} t)}
\right)~.
\end{align}
\end{small}

Note that these curvature-based estimates of standard error for $\rho, \gamma, \eta, \Ea$, and $\Ew$
do not capture uncertainty in the estimates of the neutral parameters.
However, uncertainty in the neutral estimates should be fairly low assuming
a sufficient number of putative neutral sites within the relevant genomic blocks.
This can be ensured by filtering element sites in genomic blocks with too few
putative neutral sites.

\subsubsection{Analytical computation of the Hessian matrix}

We now turn to describe in detail a method for computing the Hessian matrix, $H$, for a given data set
($\vect{X},\vect{O}$) and an assignment to all model parameters: the neutral parameters
$\vect{\zeta_\neut}$, as well as the selection parameters $p_1=\rho$, $p_2=\eta$, and $p_3=\gamma$.
Due to independence across sites, the log-likelihood can be expressed as follows:
\begin{equation}
\ln\left[\ld(p_1,p_2,p_3 \ ; \data,\outgroup, \vect{\hat{\zeta}_\neut})\right] ~=~ C+
\sum_{i\in E} \ln\left[P(X_{i} \ |\ O_i,\ p_1,p_2,p_3,\ \vect{\hat{\zeta}_\neut})\right]
~,
\end{equation}
where $C$ is a term that does not depend on any of the selection parameters. Therefore, the Hessian
can similarly be expressed as a sum over sites, $H ~=~\sum_{i\in E}H^i$, where
\begin{equation}
H^i= \left[\frac{\partial^2\left(\ln\left[ P(X_{i} \ |\ O_i,\ p_1, p_2, p_3,\ \vect{\hat{\zeta}_\neut})\right]\right)}{\partial p_j\partial p_k}\right]_{j,k\in \{1,2,3\}}~.
\end{equation}

In order to compute the Hessian matrix, we thus have to compute for each site, $i\in E$,
the partial second derivatives of the site-wise likelihood function with respect to the three selection
parameters. The site-wise likelihood can be expressed using a mixture of the two selection classes:
\begin{align}
P(X_{i} \ |\ O_i,\ \rho, \eta, \gamma,\ \vect{\hat{\zeta}_\neut}) ~=~\rho~ p^{\sel}_i(\eta,\gamma)~+~ (1-\rho) p^{\neut}_i~,
\end{align}
where $p^{\sel}_i(\eta,\gamma) \equiv P(X_{i} \ |\ O_i, S_i=\sel,\  \eta,\gamma,\ \vect{\hat{\zeta}_\neut})$ is a function of $\eta$ and $\gamma$, and
$p^{\neut}_i \equiv P(X_{i} \ |\ O_i, S_i=\neut,\  \vect{\hat{\zeta}_\neut})$ is a term that does not
depend on any of the selection parameters, and is a constant that can be derived directly from
the conditional probabilities of the model. The function
$p^{\sel}_i(\eta,\gamma)$ can be expressed as follows (see Table\ \ref{tab:cond-dist}): 
\begin{small}
\begin{align}\label{eqn:sitewise-likelihood}
p^{\sel}_i(\eta,\gamma) ~=~ P(X_{i} \ |\ O_i, S_i=\sel,\ \eta,\gamma,\ \vect{\hat{\zeta}_\neut})~=~
 \begin{cases}
(1-\eta\lambda_bt)(1-\gamma\theta_ba_n)~p^{maj}_i + &\\
\frac 13 \eta\lambda_bt~(1-p^{maj}_i)& Y_i=\text{M} \\
(1-\eta\lambda_bt)\frac 13 \gamma\theta_ba_np^{maj}_i
& Y_i=\text{L}\\
0
& Y_i=\text{H}\\
\end{cases}
\end{align}
\end{small}

\vspace{-5ex}
\noindent
where $p^{maj}_i \equiv P(Z_i=X_i^{maj}\ | O_i,\hat\lambda^O_b)$.
Notice that $p^{\sel}_i(\eta,\gamma)$ has the following general form
\begin{align}\label{eqn:sitewise-likelihood}
p^{\sel}_i(\eta,\gamma) ~=~ T_i +U_i\ \eta +V_i\ \gamma+W_i\ \eta\gamma ~,
\end{align}
in which $T_i$, $U_i$, $V_i$, and $W_i$ are determined as follows:

\noindent \begin{centering}
\begin{tabular}{lllll}
\\
\hline
 & $T_i$ & $U_i$ & $V_i$ & $W_i$\\
\hline
$Y_i=\text{M}$ & 
  $p_i^{maj}$ & 
  $\frac 13\lambda_bt(1-4p_i^{maj})$&
  $-\theta_ba_np_i^{maj}$&
  $\lambda_bt\theta_ba_np_i^{maj}$\\
$Y_i=\text{L}$ &
  0 &
  0 &
  $\frac 13 \theta_ba_np_i^{maj}$ &
   $-\frac 13\lambda_bt \theta_ba_np_i^{maj}$\\
$Y_i=\text{H}$ & 0& 0& 0& 0\\
\hline
\end{tabular}
\par\end{centering}
\vspace{10pt}

Thus, by determining $T_i$, $U_i$, $V_i$, $W_i$ and $p_i^{\neut}$ for each site, the site-wise likelihood function can be re-expressed as:
\begin{align}
P(X_{i} \ |\ O_i,\ \rho, \eta, \gamma,\ \vect{\hat{\zeta}_\neut}) ~=~
\rho\cdot\left(T_i +U_i\ \eta +V_i\ \gamma+W_i\ \eta\gamma\right) ~+~ (1-\rho)\cdot p^{\neut}_i~,
\end{align}
and the site-wise Hessian matrix can be derived using the following formulas:
\begin{small}
\begin{align}
&&H^i_{1,1}&=&
\frac{\partial^2\left(\ln\left[ P(X_{i} \ |\ O_i,\ \rho,\eta,\gamma,\ \vect{\hat{\zeta}_\neut})\right]\right)}
{\partial \rho^2} &=&
-\left(
\frac{T_i +U_i\ \eta +V_i\ \gamma+W_i\ \eta\gamma-p^{\neut}_i}{P(X_{i} \ |\ O_i,\ \rho,\eta,\gamma,\ \vect{\hat{\zeta}_\neut})}\right)^2 \\
&&H^i_{2,2}&=&
\frac{\partial^2\left(\ln\left[ P(X_{i} \ |\ O_i,\ \rho,\eta,\gamma,\ \vect{\hat{\zeta}_\neut})\right]\right)}
{\partial \eta^2} &=&
-\left(
\frac{(U_i\ +W_i\ \gamma)\rho}{P(X_{i} \ |\ O_i,\ \rho,\eta,\gamma,\ \vect{\hat{\zeta}_\neut})}\right)^2 \\
&&H^i_{3,3}&=&
\frac{\partial^2\left(\ln\left[ P(X_{i} \ |\ O_i,\ \rho,\eta,\gamma,\ \vect{\hat{\zeta}_\neut})\right]\right)}
{\partial \gamma^2} &=&
-\left(
\frac{(V_i\ +W_i\ \eta)\rho}{P(X_{i} \ |\ O_i,\ \rho,\eta,\gamma,\ \vect{\hat{\zeta}_\neut})}\right)^2 \\
H^i_{1,2}&=&H^i_{2,1}&=&
\frac{\partial^2\left(\ln\left[ P(X_{i} \ |\ O_i,\ \rho,\eta,\gamma,\ \vect{\hat{\zeta}_\neut})\right]\right)}
{\partial \rho\ \partial\eta} &=&
\frac{(U_i\ +W_i\ \gamma)p_i^{maj}}{(P(X_{i} \ |\ O_i,\ \rho,\eta,\gamma,\ \vect{\hat{\zeta}_\neut}))^2} \\
H^i_{1,3}&=&H^i_{3,1}&=&
\frac{\partial^2\left(\ln\left[ P(X_{i} \ |\ O_i,\ \rho,\eta,\gamma,\ \vect{\hat{\zeta}_\neut})\right]\right)}
{\partial \rho\ \partial\gamma} &=&
\frac{(V_i\ +W_i\ \eta)p_i^{maj}}{(P(X_{i} \ |\ O_i,\ \rho,\eta,\gamma,\ \vect{\hat{\zeta}_\neut}))^2} \\%
H^i_{2,3}&=&H^i_{3,2}&=&
\frac{\partial^2\left(\ln\left[ P(X_{i} \ |\ O_i,\ \rho,\eta,\gamma,\ \vect{\hat{\zeta}_\neut})\right]\right)}
{\partial \eta\ \partial\gamma} &=&
\frac{W_ip_i^{maj}\rho(1-\rho)+\rho^2(T_iW_i-U_iV_i)}{(P(X_{i} \ |\ O_i,\ \rho,\eta,\gamma,\ \vect{\hat{\zeta}_\neut}))^2}
\end{align}
\end{small}

These site-wise Hessians, $H^i$, are summed across all sites $i\in E$, and then the Hessian is negated
and inverted to obtain the variance/covariance matrix: $V = (-\sum_{i\in E}H^i)^{-1}$.

\subsection{Computing the Posterior Expected Counts $\boldsymbol{\Ea}$ and $\boldsymbol{\Ew}$}

Given a joint assignment to all model variables, it is possible
to produce posterior expectations for various measurements that directly relate to the
the different modes of selection, namely strong positive and weak negative selection.
A useful measure for the extent to which positive selection
has affected the collection of functional elements is, $\PD$, the number of
divergences within element sites that are driven by
positive selection (also referred to as the number of adaptive divergences).
A similar measurement pertaining to
weak negative selection is $\WP$, the number of polymorphic sites subject to weak negative
selection. Expected values for $\PD$ and $\WP$ are obtained by summing
over site-wise posterior probabilities, as in the E step of the EM algorithm for selection inference:
\begin{align}
\label{eqn:E-A}
\Ea~ =&~\langle c_E(Y_i=\text{M},Z_i\neq A_i, S_i=\sel)\rangle\\
\notag = &~\sum_{i\in E\ | Y_i=\text{M}}P(Z_i\neq X_i^{maj},S_i=\sel\ | X_i,O_i,\ \vect\zeta)~,\\
\Ew~ =&~\langle  c_E(Y_i=\text{L}, S_i=\sel)\rangle\\
\notag = &~\sum_{i\in E\ | Y_i=\text{L}}P(S_i=\sel\ | X_i,O_i,\ \vect\zeta)~,
\end{align}
where $\langle c_E(\chi) \rangle$ denotes the expected number of element sites with
model variable configuration $\chi$. The site-wise posterior probabilities are computed,
as in the EM algorithm, using the joint probabilities in Table \ref{tab:em-posterior}.


These formulas makes use of our two main assumptions regarding modes of selection,
namely, that divergence at selected
sites occurs only due to positive selection, and polymorphism at selected
sites occurs only due to weak negative selection and is restricted to `L' sites.
For normalization, we will typically divide $\PD$ and $\WP$ by the total number of element sites, $|E|$
(in kilobases).
Alternatively, by normalizing $\Ea$ by the total (expected) number of divergences,
we can also obtain an estimate of the fraction of fixed
differences driven by positive selection, referred to in the literature as $\alpha$ \citep{SMITEYRE02}.
Both measures, $\Ea$ per site and $\alpha$, provide useful and somewhat complementary 
information on the extent to which positive
selection has influenced the functional elements of interest. The fraction of fixed differences, $\alpha$,
which has been used in several recent studies as a measure for positive selection \citep{SMITEYRE02,ANDO05},
describes the {\em relative} influence of positive selection on the set of
observed divergences. As such, this measure also reflects negative selection acting on the sites,
since negative selection reduces the overall number of divergences, and thus leads to an increase in $\alpha$.
Conversely, $\Ea$ per site measures the {\em absolute} influence of positive selection on the data, and as such, 
is not influenced by negative selection.

\subsection{Simulation Setup}
\label{supp:simulation}

\subsubsection{Demographic Model}

The demographic model used in simulation was designed to reflect the joint evolutionary history of humans and
their closest primate relatives: the chimpanzee, orangutan, and rhesus macaque (see Supplementary Methods).
The effective population size was held constant at $N_e = $10,000 across the outgroup portion of the
phylogeny, and divergence times of 6.5, 17.5, and 25 million years ago were assumed for 
the chimpanzee, orangutan, and rhesus macaque outgroup populations,
respectively.  These times were expressed in
generations by assuming an average generation time of 20 years throughout the phylogeny.
In order to validate the robustness of our methods to changes in ancestral population sizes,
we simulated the target population using four different demographic scenarios.
In the simplest scenario, the target population was simulated
with constant size since divergence from chimpanzee.  Another scenario contained a moderate population
expansion, and the final two scenarios contained population bottlenecks and
exponential 
expansions (Table\ \ref{tab:demography}). The intensity and timing of the bottlenecks
and expansions were taken from the demographic model suggested by \cite{GUTEETAL09},
reflecting the respective demographic histories of African, European, and East-Asian populations.

\subsubsection{Modeling recombination, mutation rate variation, and selection}

Our simulations were carried out using SFS\_CODE \citep{HERN08}, which provides a
flexible framework for
full forward simulation of sequence evolution in populations with selection.
Each simulation consisted of a synthetic block containing a 10 bp element and 5,000 neutral sites flanking it on each side. 
The synthetic blocks were simulated with a constant population-scaled
recombination rate of $\rho = 4N_er = 4.4\times 10^{-4}$ recombinations per nucleotide position,
and a population-scaled mutation rate that varied across the different simulated blocks,
sampled with a mean value of $\theta = 4N_e\mu = 7.2\times 10^{-4}$ and a
standard deviation equal to one tenth 
of the mean.
Each nucleotide position in each simulated block was assigned
to one of four selection classes: neutral evolution ($2N_es=0$),
strong negative selection ($2N_es=-100$), weak negative 
selection ($2N_es=-10$),  and positive selection ($2N_es=10$).
The 10 kb flanking sites were all assigned to the neutral class, and 
a multinomial
distribution was used to determine the number of sites in each selection
class in the 10 bp element.
Selection at WN and SN sites was applied constantly across the phylogeny.
Positive selection at SP sites was simulated in a slightly more complex way,
since the default behavior in SFS\_CODE tends to produce repeated fixation events
because positively
selected sites are always assumed to have a suboptimal allele, even after a
fixation has occurred.
SP sites were, therefore, simulated under weak negative selection ($2N_es=-10$)
across most of the population phylogeny, with a switch
to positive selection ($2N_es=10$) on the lineage leading to the target population
at the point of divergence from chimpanzee (325,000 generations ago) for a period of 310,000
generations, followed by a return to weak negative selection for the final
15,000 generations in the simulated history. This strategy provides an
opportunity for fixation of positively selected mutations, but prevents
recurrent positive selection from obscuring the signal of long term
adaptation.

\subsubsection{Technical simulation settings}

To express times in units of $2N_e$ generations, as required by SFS\_CODE,
we used the ancestral effective population size of 10,000 and assumed
a generation time of 20 years.  To save in computational cost,
we used $N_{\text{sim}} =$ 1,000 individuals in forward simulations.
Notice that as long as $N_{\text{sim}}$ is sufficiently large to limit sampling error,
this strategy should have little effect on results,
because all parameters are expressed in population-scaled form.
We used the default ``burn-in'' of $5 \times 2N_{\text{sim}}$=10,000 generations
before initiating the specified demographic scenario.
At the end of the simulations, we sampled a single haploid genome from each of the 
three outgroup populations and fifty diploid individuals (100 chromosome samples)
from the target population, closely resembling the scenario in our data analysis
(with 54 individuals).

Below is an example command line call to SFS\_CODE for a simulation of a synthetic block
that was part of the `Mix' data set used in our simulation study (Fig.\ \ref{fig:sim-results}A).
The block contains two 5 kb neutral loci flanking a 10 bp element with three neutral sites,
three SP sites, two SN sites, and two WN sites.
The mutation-scaled population size sampled for this simulation was $\theta=0.0006492911$.
\begin{quote}
sfs\_code 7 1  -o sim.mix.block11.out -n 50 -N 1000 -TE 62.5 --INF\_SITES
\textbackslash \\\hspace*{0.25in}
--theta 0.0006492911 --rho 0.00044 -L 6 5000 3 3 2 2 5000 -a N
\textbackslash \\\hspace*{0.25in}
 -W L 0 0 -W L 5 0 -W L 1 0 -W L 2 1 10 0 1 -TW 46.3 P 3 L 2 1 10 1 0
\textbackslash \\\hspace*{0.25in}
 -TW 61.75 P 3 L 2 1 10 0 1 -TW 61.75 P 4 L 2 1 10 0 1 -TW 61.75 P 5 L 2 1 10 0 1 
\textbackslash \\\hspace*{0.25in}
-W L 3 1 100 0 1 -W L 4 1 10 0 1 -TS 0 0 1 -TS 18.75 1 2 -TS 46.25 2 3 -TS 61.94 3 4 
\textbackslash \\\hspace*{0.25in}
-Td 61.95 P 4 1.23 -TS 62.15 4 5 -Td 62.15 P 5 0.17 -TS 62.448 5 6 
\textbackslash \\\hspace*{0.25in}
-Td 62.448 P 5 0.4761905 -Td 62.448 P 6 0.2428571 
\textbackslash \\\hspace*{0.25in}
-Tg 62.448 P 5 79.84042539 -Tg 62.448 P 6 109.69860461139491
\end{quote}

\subsection{Analysis of Human Genomic Elements}

\subsubsection{Variant Calling}

For individual human genome sequences we used 54 unrelated individuals taken from
the ``69 Genomes'' data set released by Complete Genomics in 2011
(\url{http://www.completegenomics.com/public-data/69-Genomes/}).
The 54 unrelated individuals were identified 
by eliminating 13 individuals from the 17-member CEPH pedigree (all but the
four grandparents) and the child in each of the two trios.
Genotype calls for these individuals were extracted from the individual ``masterVar'' files\footnote{The masterVar files are included in tar files available from
  ftp://ftp2.completegenomics.com.  The tar files currently have URLs of
  the form 
ftp://ftp2.completegenomics.com/\$GROUP/ASM\_Build37\_2.0.0/\$SAMPLE-200-37-ASM-VAR-files.tar,
where \$GROUP 
is one of {`Diversity', `Pedigree\_1463', `YRI\_trio', `PUR\_trio'} and
\$SAMPLE is the sample name.  The enclosed
masterVar files can be identified by names of the form 
masterVarBeta-\$NAME-200-37-ASM.tsv.bz2.}.
We considered variants designated as ``SNPs'' and ``length-preserving substitutions'' in the
masterVar files.  We also recorded the positions at which Complete Genomics could not
confidently assign a variant call for subsequent masking (see below).  All
other positions were assumed to be homozygous for the allele reported in
the UCSC hg19 reference genome
(Genome Reference Consortium Human Build 37).  For outgroup genomes, we used alignments from the
UCSC Genome Browser of the human reference genome (hg19) with the
chimpanzee (panTro2), orangutan (ponAbe2), and rhesus macaque (rheMac2)
genomes.  For each position in hg19, we recorded the aligned base from each
of the three nonhuman primates, or an indication that no syntenic alignment was
available at that position.

\subsubsection{Filters}

We considered the autosomes only (chr1--chr22), and applied various filters to reduce the
impact of technical errors from alignment, sequencing, genotype inference,
and genome assembly.
Our filters included repetitive sequences (simple repeats),
recent transposable elements, recent segmental duplications,
CpG site pairs, CpG islands, and regions not showing conserved synteny
with outgroup genomes. CpG site pairs (prone to hypermutability) were identified as position
pairs having a ``CG'' dinucleotide in any of the human samples or the outgroup genomes.
As a further caution, we excluded position pairs with C* in an outgroup and *G in human, to avoid
potential ancestral CpGs.  Sites in CpG islands were excluded because of
their unusual base composition and substitution patterns compared with
nearby regions.   Non-syntenic regions and gaps in the outgroup alignment were masked (by ``N"s)
individually in each outgroup genome.  
This uncertainty was later incorporated
by the phylogenetic model fitting stage of the inference (see above).
Sites with missing genotypes in one of the 54 human individual genome sequences
were masked out completely (see above Section on model adjustments for missing data).
This additional missing data filter excluded roughly 20\% of nucleotide sites in the genome.
Further details about several of these filters
are provided by \cite{GRONETAL11}.

\subsubsection{Putative Neutral Sites}

Our collection of genome-wide putative neutral sites was determined by
eliminating sites likely to be under selection.  Following a similar
procedure to that described in \cite{GRONETAL11}, we eliminated: (1) exons
of annotated protein-coding genes and the 1000 bp flanking them; (2)
conserved non-coding elements (identified by phastCons) and 100 bp flanking
them; and (3) RNA genes from GENCODE
v.11 and 1000 bp flanks. While a fraction of the remaining sites is likely to be
functional, this set should be dominated by sequence evolving under neutral
drift. Our examination of collections of short elements taken from this
``neutral'' set suggests that it contains at most a small fraction of functional
nucleotides (Fig.\ \ref{fig:genomic-results}A \&
\ref{fig:neutral-rho-hist}).

\subsubsection{Non-coding Genomic Elements from GENCODE}

We extracted several classes of non-coding genomic elements from the transcript annotations
provided by GENCODE v.13  \citep{HARRETAL06}. These annotations were
downloaded as a GTF file from
\url{http://www.gencodegenes.org/releases/13.html} and subsequently processed.
We considered ``exon" level annotations of various non-coding RNAs and
used only elements tagged as ``known" (rather than ``putative" or ``novel"). We initially
considered all four classes of noncoding RNAs: large interspersed non-coding RNAs (lincRNAs),
microRNAs (miRNAs), small nucleolar RNAs (snoRNAs), and small nuclear RNAs (snRNAs).
However, since the snRNA class had very sparse data, with 229 elements and only 2,648 sites after filtering,
we chose to remove it from the analysis. In addition to the three classes of noncoding RNAs,
we considered a class of promoter elements corresponding to 100 bp upstream the transcription
start site of known protein coding genes. Those were extracted using the ``transcript" level
entries for ``known" protein coding genes in the GTF file.

\subsubsection{Binding Sites for the GATA2 Transcription Factor}

In addition to the four GENCODE classes of noncoding elements, we analyzed a collection of binding sites for
the GATA2 transcription factor. We used binding
sites based on genome-wide chromatin
immunoprecipitation and sequencing (ChIP-seq) data from the ENCODE
project \citep{BERNETAL12}. These high-confidence binding sites were identified as part
of our study of 78 human transcription factors \citep{ARBIETAL12}. The full pipeline is
described in detail in that manuscript. Briefly, this pipeline  involved de novo motif discovery 
\citep[using MEME;][]{BAILELKA94}, manual inspection, and binding-site prediction at ChIP-seq peaks (using MAST). 
ChIP-seq data from multiple cell lines was used to predict a separate set of binding sites for each cell type,
and these sets were then merged. The sequence motif identified for GATA2,
as depicted in Fig.\ \ref{fig:genomic-results}C,
contains 11 positions, with seven positions (4-10) having information content $>\frac 12$.
In the literature, the binding sequence is often depicted by the core GATA motif (positions 5-8)
with the two flanking bases \citep{KOENGE93,MERIORKI93}.

\subsubsection{Structural Partitioning of miRNAs}

To partition each of the 1,424 primary miRNAs from GENCODE v.13
into the five structural regions shown in Fig.\ \ref{fig:genomic-results}D,
we predicted the secondary structures using the RNAfold and RNAsubopt
programs from the Vienna RNA software package \citep{HOFAETAL94,LOREETAL11}.
Human miR (mature miRNA) and miR* (star) coordinates were downloaded from miRBase rev.\ 19 
\citep{GRIFETAL06,GRIFETAL08,KOZOGRIF11}.
MiRBase does not distinguish between mature and star sequences,
so in order to label each sequence as mature or star, we examined the
total read count reported by the {\verb mature_read_count } table in the miRBase database
and selected the mature sequence as the predominantly expressed strand.
In cases where miRBase only reported the mature sequence, we inferred the star
sequence to be the complementary region in the hairpin structure.
The predicted folds of 23 transcripts contained bifurcating stems (or multi-loop structures),
and for an additional 25 transcripts, miRBase did not report a mature sequence.
These 48 transcripts were removed from the analysis, leaving 1,376 miRNAs.
Using the predicted secondary structure and the identification of the mature and
star sequences, we used custom scripts to partition the hairpin structure into the five different
components: (1) loop, (2) loop-proximal stem, (3) lower stem, (4) star, and (5) mature.

\subsubsection{Analysis of Genomic Elements Using Alternative Frequency Thresholds}

We analyzed the five classes of noncoding elements (lincRNAs, miRNAs, snoRNAs, promoters,
and GATA2 binding sites) with \ins using various thresholds  for distinguishing between
low and high frequency polymorphisms. Figure \ref{fig:genomic-thresholds} describes the
estimates of $\rho$ obtained for these five classes as a function of the frequency threshold
used. Overall, the estimates were insensitive to the chosen threshold as long as it was sufficiently
high ($f>$10\%), as indicated by our simulation study (Fig.\ \ref{fig:sim-results}B).
Estimates for GATA2 showed slightly more fluctuations than those for the other classes, possibly
due to the combined effects of sparse data and more complex patterns of polymorphism
caused by positive selection. The estimates of $\Ea$ obtained for GATA2, also
shown in Figure \ref{fig:genomic-thresholds} (bottom right), appear to follow the same fluctuation pattern as that of
the $\rho$ estimates for that class. We conclude that estimates obtained in our main analysis,
using a frequency threshold of 15\%, appear to be robust to our choice of threshold, and possibly
slightly conservative in the case of GATA2.

We also compared our estimates to those obtained by the simple site-count based estimators,
$\hat\rho_{\text{Poly}}$ and $\hat\rho_{\text{Div}}$ (red and blue horizontal lines, resp.).
As shown in our simulation study (Fig.\ \ref{fig:sim-results}), weak negative selection
(found in lincRNAs, snoRNAs, and promoters) results in under-estimation of $\rho$ by the 
polymorphism-based estimator, and positive selection in GATA2 results in under-estimation
of $\rho$ by the divergence-based estimator. We also considered a modified version of
the polymorphism-based estimates and MK-based estimates based only on high-frequency polymorphisms
\begin{align}
\label{eqn:rho-poly-h}
\hat\rho_{\text{Poly-H}} &=~~  1 - \frac{H_{E}\ |F|}{|E|\ H_{F}}~,\\
\label{eqn:mk-pd-h}
\hat\PD_{\text{-MK-H}} &=~~ 
D_E\ - \frac{H_{E}\ D_{F}}{H_{F}}~.
\end{align}
where $H_E$ and $H_F$ are the numbers of high-frequency polymorphisms in element sites and flanking sites,
respectively. The approach of discarding low-frequency polymorphisms as been used as a simple means for dealing with
the effects of weak negative selection in several previous studies of selection
\citep[e.g.,][]{FAYETAL01,ZHANGLI05,CHAREYRE08}. Comparing these estimates to our model-based estimates using
the same set of frequency thresholds (Fig.\ \ref{fig:genomic-thresholds}), we found that they provided
only a partial correction for the effects of weak negative selection. This is likely because they do not consider
patterns of divergence when estimating the fraction of sites under selection, and because they have reduced power
due to a reduction in the data \citep{EYREKEIG09}.

\comment{

For this comparison we also examined a version of the polymorphism-based
estimator, $\hat\rho_{\text{Poly}}$, that considered only high-frequency polymorphisms. Although
this version corrected some of the bias in the simple estimates .. 
(Supplementary Figure\ [ADD SUPP FIG])
In the case of --------, 
these corrected polymorphism-based estimates are
very close to our model-based estimates. On the other hand,
for other classes of elements, such as ------, the correction appears to be only partially effective, as it
does not consider divergence patterns and has the basic pitfall that it does not adequately account
for variation in mutation rate and genealogical background along the genome. In addition, ignoring
low frequency polymorphisms effectively eliminates a large portion of the polymorphisms from the data,
thus reducing power significantly \citep{EYREKEIG09}.

}

\clearpage


\begin{thebibliography}{62}
\providecommand{\natexlab}[1]{#1}

\bibitem[{Andolfatto(2005)}]{ANDO05}
Andolfatto P. 2005.
\newblock {Adaptive evolution of non-coding DNA in Drosophila}.
\newblock Nature. 437:1149--1152.

\bibitem[{Arbiza et~al.(2012)Arbiza, Gronau, Aksoy, Hubisz, Gulko, Keinan, and
  Siepel}]{ARBIETAL12}
Arbiza L, Gronau I, Aksoy BA, Hubisz MJ, Gulko B, Keinan A, Siepel A. 2012.
\newblock Genome-wide inference of natural selection on human transcription
  factor binding sites.
\newblock Submitted. .

\bibitem[{Bernstein et~al.(2012)Bernstein, Birney, Dunham et~al.}]{BERNETAL12}
Bernstein BE, Birney E, Dunham I, et~al. (600 co-authors). 2012.
\newblock {{A}n integrated encyclopedia of {D}{N}{A} elements in the human
  genome}.
\newblock Nature. 489:57--74.

\bibitem[{Bierne and Eyre-Walker(2004)}]{BIEREYRE04}
Bierne N, Eyre-Walker A. 2004.
\newblock {{T}he genomic rate of adaptive amino acid substitution in
  {D}rosophila}.
\newblock Mol. Biol. Evol. 21:1350--1360.

\bibitem[{Boyko et~al.(2008)Boyko, Williamson, Indap et~al.}]{BOYKETAL08}
Boyko AR, Williamson SH, Indap AR, et~al. (14 co-authors). 2008.
\newblock {{A}ssessing the evolutionary impact of amino acid mutations in the
  human genome}.
\newblock PLoS Genet. 4:e1000083.

\bibitem[{Bresnick et~al.(2010)Bresnick, Lee, Fujiwara, Johnson, and
  Keles}]{BRESETAL10}
Bresnick EH, Lee HY, Fujiwara T, Johnson KD, Keles S. 2010.
\newblock {{G}{A}{T}{A} switches as developmental drivers}.
\newblock J. Biol. Chem. 285:31087--31093.

\bibitem[{Bustamante et~al.(2005)Bustamante, Fledel-Alon, Williamson
  et~al.}]{BUSTETAL05}
Bustamante CD, Fledel-Alon A, Williamson S, et~al. (11 co-authors). 2005.
\newblock {{N}atural selection on protein-coding genes in the human genome}.
\newblock Nature. 437:1153--1157.

\bibitem[{Bustamante et~al.(2002)Bustamante, Nielsen, Sawyer, Olsen,
  Purugganan, and Hartl}]{BUSTETAL02}
Bustamante CD, Nielsen R, Sawyer SA, Olsen KM, Purugganan MD, Hartl DL. 2002.
\newblock {{T}he cost of inbreeding in {A}rabidopsis}.
\newblock Nature. 416:531--534.

\bibitem[{Charlesworth and Eyre-Walker(2008)}]{CHAREYRE08}
Charlesworth J, Eyre-Walker A. 2008.
\newblock {{T}he {M}c{D}onald-{K}reitman test and slightly deleterious
  mutations}.
\newblock Mol. Biol. Evol. 25:1007--1015.

\bibitem[{Chernoff(1954)}]{CHER54}
Chernoff H. 1954.
\newblock {{O}n the distribution of the likelihood ratio}.
\newblock Ann Math Statist. 25:573--578.

\bibitem[{Chiaromonte et~al.(2003)Chiaromonte, Weber, Roskin, Diekhans, Kent,
  and Haussler}]{CHIAETAL03}
Chiaromonte F, Weber RJ, Roskin KM, Diekhans M, Kent WJ, Haussler D. 2003.
\newblock The share of human genomic {DNA} under selection estimated from
  human-mouse genomic alignments.
\newblock In: Cold Spring Harbor Symp Quant Biol. volume~68, pp. 245--254.

\bibitem[{Clark et~al.(2007)Clark, Eisen, Smith et~al.}]{CLARETAL07}
Clark A, Eisen M, Smith D, et~al. (11 co-authors). 2007.
\newblock {{E}volution of genes and genomes on the {D}rosophila phylogeny}.
\newblock Nature. 450:203--218.

\bibitem[{Cooper et~al.(2005)Cooper, Stone, Asimenos, Green, Batzoglou, and
  Sidow}]{COOPETAL05}
Cooper GM, Stone EA, Asimenos G, Green ED, Batzoglou S, Sidow A. 2005.
\newblock {Distribution and intensity of constraint in mammalian genomic
  sequence}.
\newblock Genome Res. 15:901--913.

\bibitem[{Dermitzakis and Clark(2002)}]{DERMCLAR02}
Dermitzakis ET, Clark AG. 2002.
\newblock {Evolution of transcription factor binding sites in mammalian gene
  regulatory regions: conservation and turnover}.
\newblock Mol Biol Evol. 19:1114--1121.

\bibitem[{Dore et~al.(2012)Dore, Chlon, Brown, White, and
  Crispino}]{DOREETAL12}
Dore LC, Chlon TM, Brown CD, White KP, Crispino JD. 2012.
\newblock {{C}hromatin occupancy analysis reveals genome-wide {G}{A}{T}{A}
  factor switching during hematopoiesis}.
\newblock Blood. 119:3724--3733.

\bibitem[{Drmanac et~al.(2010)Drmanac, Sparks, Callow et~al.}]{DRMAETAL10}
Drmanac R, Sparks AB, Callow MJ, et~al. (65 co-authors). 2010.
\newblock {{H}uman genome sequencing using unchained base reads on
  self-assembling {D}{N}{A} nanoarrays}.
\newblock Science. 327:78--81.

\bibitem[{Eyre-Walker and Keightley(2009)}]{EYREKEIG09}
Eyre-Walker A, Keightley PD. 2009.
\newblock {{E}stimating the rate of adaptive molecular evolution in the
  presence of slightly deleterious mutations and population size change}.
\newblock Mol. Biol. Evol. 26:2097--2108.

\bibitem[{Eyre-Walker, Woolfit and Phelps(2006)Eyre-Walker, Woolfit, and
  Phelps}]{EYREETAL06}
Eyre-Walker A, Woolfit M, Phelps T. 2006.
\newblock {{T}he distribution of fitness effects of new deleterious amino acid
  mutations in humans}.
\newblock Genetics. 173:891--900.

\bibitem[{Fay, Wyckoff and Wu(2001)Fay, Wyckoff, and Wu}]{FAYETAL01}
Fay JC, Wyckoff GJ, Wu CI. 2001.
\newblock {{P}ositive and negative selection on the human genome}.
\newblock Genetics. 158:1227--1234.

\bibitem[{Gerstein et~al.(2010)Gerstein, Lu, Van~Nostrand et~al.}]{GERSETAL10}
Gerstein MB, Lu ZJ, Van~Nostrand EL, et~al. (131 co-authors). 2010.
\newblock {{I}ntegrative analysis of the {C}aenorhabditis elegans genome by the
  mod{E}{N}{C}{O}{D}{E} project}.
\newblock Science. 330:1775--1787.

\bibitem[{Guig\'o et~al.(2003)Guig\'o, Dermitzakis, Agarwal
  et~al.}]{GUIGETAL03}
Guig\'o R, Dermitzakis ET, Agarwal P, et~al. (11 co-authors). 2003.
\newblock Comparison of mouse and human genomes followed by experimental
  verification yields an estimated 1,019 additional genes.
\newblock Proc Natl Acad Sci USA. 100:1140--1145.

\bibitem[{Guttman et~al.(2009)Guttman, Amit, Garber et~al.}]{GUTTETAL09}
Guttman M, Amit I, Garber M, et~al. (20 co-authors). 2009.
\newblock {{C}hromatin signature reveals over a thousand highly conserved large
  non-coding {R}{N}{A}s in mammals}.
\newblock Nature. 458:223--227.

\bibitem[{Harrow et~al.(2006)Harrow, Denoeud, Frankish et~al.}]{HARRETAL06}
Harrow J, Denoeud F, Frankish A, et~al. (11 co-authors). 2006.
\newblock {{G}{E}{N}{C}{O}{D}{E}: producing a reference annotation for
  {E}{N}{C}{O}{D}{E}}.
\newblock Genome Biol. 7 Suppl 1:1--9.

\bibitem[{Hernandez(2008)}]{HERN08}
Hernandez RD. 2008.
\newblock {{A} flexible forward simulator for populations subject to selection
  and demography}.
\newblock Bioinformatics. 24:2786--2787.

\bibitem[{Hubisz, Pollard and Siepel(2011)Hubisz, Pollard, and
  Siepel}]{HUBIETAL11}
Hubisz MJ, Pollard KS, Siepel A. 2011.
\newblock {{PHAST} and {RPHAST}: Phylogenetic analysis with space/time models}.
\newblock Briefings in Bioinformatics. 12:41--51.

\bibitem[{Kellis et~al.(2003)Kellis, Patterson, Endrizzi, Birren, and
  Lander}]{KELLETAL03}
Kellis M, Patterson N, Endrizzi M, Birren B, Lander ES. 2003.
\newblock Sequencing and comparison of yeast species to identify genes and
  regulatory elements.
\newblock Nature. 423:241--254.

\bibitem[{Ko and Engel(1993)}]{KOENGE93}
Ko LJ, Engel JD. 1993.
\newblock {{D}{N}{A}-binding specificities of the {G}{A}{T}{A} transcription
  factor family}.
\newblock Mol. Cell. Biol. 13:4011--4022.

\bibitem[{Kondrashov and Crow(1993)}]{KONDCROW93}
Kondrashov AS, Crow JF. 1993.
\newblock {{A} molecular approach to estimating the human deleterious mutation
  rate}.
\newblock Hum. Mutat. 2:229--234.

\bibitem[{Lai et~al.(2003)Lai, Tomancak, Williams, and Rubin}]{LAIETAL03}
Lai EC, Tomancak P, Williams RW, Rubin GM. 2003.
\newblock {{C}omputational identification of {D}rosophila micro{R}{N}{A}
  genes}.
\newblock Genome Biol. 4:R42.

\bibitem[{Lehmann and Casella(1998)}]{LEHMCASE98}
Lehmann EEL, Casella G. 1998.
\newblock {Theory of point estimation}.
\newblock Springer.

\bibitem[{Lunter, Ponting and Hein(2006)Lunter, Ponting, and Hein}]{LUNTETAL06}
Lunter G, Ponting CP, Hein J. 2006.
\newblock {{G}enome-wide identification of human functional {D}{N}{A} using a
  neutral indel model}.
\newblock PLoS Comput. Biol. 2:e5.

\bibitem[{Mackay et~al.(2012)Mackay, Richards, Stone et~al.}]{MACKETAL12}
Mackay TF, Richards S, Stone EA, et~al. (52 co-authors). 2012.
\newblock {{T}he {D}rosophila melanogaster {G}enetic {R}eference {P}anel}.
\newblock Nature. 482:173--178.

\bibitem[{Margulies et~al.(2003)Margulies, Blanchette, {NISC Comparative
  Sequencing Program}, Haussler, and Green}]{MARGETAL03}
Margulies EH, Blanchette M, {NISC Comparative Sequencing Program}, Haussler D,
  Green ED. 2003.
\newblock Identification and characterization of multi-species conserved
  sequences.
\newblock Genome Res. 13:2507--2518.

\bibitem[{Marques and Ponting(2009)}]{MARQPONT09}
Marques AC, Ponting CP. 2009.
\newblock {{C}atalogues of mammalian long noncoding {R}{N}{A}s: modest
  conservation and incompleteness}.
\newblock Genome Biol. 10:R124.

\bibitem[{Matera, Terns and Terns(2007)Matera, Terns, and Terns}]{MATEETAL07}
Matera AG, Terns RM, Terns MP. 2007.
\newblock {{N}on-coding {R}{N}{A}s: lessons from the small nuclear and small
  nucleolar {R}{N}{A}s}.
\newblock Nat. Rev. Mol. Cell Biol. 8:209--220.

\bibitem[{McDonald and Kreitman(1991)}]{MCDOKREI91}
McDonald JH, Kreitman M. 1991.
\newblock {{A}daptive protein evolution at the {A}dh locus in {D}rosophila}.
\newblock Nature. 351:652--654.

\bibitem[{Merika and Orkin(1993)}]{MERIORKI93}
Merika M, Orkin SH. 1993.
\newblock {{D}{N}{A}-binding specificity of {G}{A}{T}{A} family transcription
  factors}.
\newblock Mol. Cell. Biol. 13:3999--4010.

\bibitem[{Moses et~al.(2006)Moses, Pollard, Nix, Iyer, Li, Biggin, and
  Eisen}]{MOSEETAL06}
Moses AM, Pollard DA, Nix DA, Iyer VN, Li XY, Biggin MD, Eisen MB. 2006.
\newblock {{L}arge-scale turnover of functional transcription factor binding
  sites in {D}rosophila}.
\newblock PLoS Comput. Biol. 2:e130.

\bibitem[{{Mouse Genome Sequencing Consortium}(2002)}]{CONS02}
{Mouse Genome Sequencing Consortium}. 2002.
\newblock Initial sequencing and comparative analysis of the mouse genome.
\newblock Nature. 420:520--562.

\bibitem[{Okamura et~al.(2008)Okamura, Phillips, Tyler, Duan, Chou, and
  Lai}]{OKAMETAL08}
Okamura K, Phillips MD, Tyler DM, Duan H, Chou YT, Lai EC. 2008.
\newblock {{T}he regulatory activity of micro{R}{N}{A}* species has substantial
  influence on micro{R}{N}{A} and 3' {U}{T}{R} evolution}.
\newblock Nat. Struct. Mol. Biol. 15:354--363.

\bibitem[{Pang, Frith and Mattick(2006)Pang, Frith, and Mattick}]{PANGETAL06}
Pang KC, Frith MC, Mattick JS. 2006.
\newblock {{R}apid evolution of noncoding {R}{N}{A}s: lack of conservation does
  not mean lack of function}.
\newblock Trends Genet. 22:1--5.

\bibitem[{Pollard et~al.(2010)Pollard, Hubisz, Rosenbloom, and
  Siepel}]{POLLETAL09}
Pollard KS, Hubisz MJ, Rosenbloom KR, Siepel A. 2010.
\newblock {{D}etection of nonneutral substitution rates on mammalian
  phylogenies}.
\newblock Genome Res. 20:110--121.

\bibitem[{Pollard et~al.(2006)Pollard, Salama, Lambert et~al.}]{POLLETAL06}
Pollard KS, Salama SR, Lambert N, et~al. (11 co-authors). 2006.
\newblock {{A}n {R}{N}{A} gene expressed during cortical development evolved
  rapidly in humans}.
\newblock Nature. 443:167--172.

\bibitem[{Prabhakar et~al.(2008)Prabhakar, Visel, Akiyama et~al.}]{PRABETAL08}
Prabhakar S, Visel A, Akiyama JA, et~al. (13 co-authors). 2008.
\newblock Human-specific gain of function in a developmental enhancer.
\newblock Science. 321:1346--1350.

\bibitem[{Roy et~al.(2010)Roy, Ernst, Kharchenko et~al.}]{ROYETAL10}
Roy S, Ernst J, Kharchenko PV, et~al. (269 co-authors). 2010.
\newblock {{I}dentification of functional elements and regulatory circuits by
  {D}rosophila mod{E}{N}{C}{O}{D}{E}}.
\newblock Science. 330:1787--1797.

\bibitem[{Sawyer and Hartl(1992)}]{SAWYHART92}
Sawyer SA, Hartl DL. 1992.
\newblock {{P}opulation genetics of polymorphism and divergence}.
\newblock Genetics. 132:1161--1176.

\bibitem[{Schmidt et~al.(2010)Schmidt, Wilson, Ballester et~al.}]{SCHMETAL10}
Schmidt D, Wilson MD, Ballester B, et~al. (13 co-authors). 2010.
\newblock {{F}ive-vertebrate {C}h{I}{P}-seq reveals the evolutionary dynamics
  of transcription factor binding}.
\newblock Science. 328:1036--1040.

\bibitem[{Self and Liang(1987)}]{SELFLIAN87}
Self S, Liang K. 1987.
\newblock {Asymptotic properties of maximum likelihood estimators and
  likelihood ratio tests under nonstandard conditions}.
\newblock J. Am. Stat. Assoc. 82:605--610.

\bibitem[{Siepel et~al.(2005)Siepel, Bejerano, Pedersen et~al.}]{SIEPETAL05}
Siepel A, Bejerano G, Pedersen JS, et~al. (16 co-authors). 2005.
\newblock {Evolutionarily conserved elements in vertebrate, insect, worm, and
  yeast genomes}.
\newblock Genome Res. 15:1034--1050.

\bibitem[{Siepel et~al.(2007)Siepel, Diekhans, Brejova et~al.}]{SIEPETAL07}
Siepel A, Diekhans M, Brejova B, et~al. (18 co-authors). 2007.
\newblock {{T}argeted discovery of novel human exons by comparative genomics}.
\newblock Genome Res. 17:1763--1773.

\bibitem[{Smith and Eyre-Walker(2002)}]{SMITEYRE02}
Smith NG, Eyre-Walker A. 2002.
\newblock {{A}daptive protein evolution in {D}rosophila}.
\newblock Nature. 415:1022--1024.

\bibitem[{Stark et~al.(2007{\natexlab{a}})Stark, Kheradpour, Parts, Brennecke,
  Hodges, Hannon, and Kellis}]{STARETAL07b}
Stark A, Kheradpour P, Parts L, Brennecke J, Hodges E, Hannon GJ, Kellis M.
  2007{\natexlab{a}}.
\newblock {{S}ystematic discovery and characterization of fly micro{R}{N}{A}s
  using 12 {D}rosophila genomes}.
\newblock Genome Res. 17:1865--1879.

\bibitem[{Stark et~al.(2007{\natexlab{b}})Stark, Lin, Kheradpour
  et~al.}]{STARETAL07}
Stark A, Lin M, Kheradpour P, et~al. (61 co-authors). 2007{\natexlab{b}}.
\newblock {{D}iscovery of functional elements in 12 {D}rosophila genomes using
  evolutionary signatures}.
\newblock Nature. 450:219--232.

\bibitem[{Stoletzki and Eyre-Walker(2011)}]{STOLEYRE11}
Stoletzki N, Eyre-Walker A. 2011.
\newblock {{E}stimation of the neutrality index}.
\newblock Mol. Biol. Evol. 28:63--70.

\bibitem[{{The 1000 Genomes Project Consortium}(2010)}]{1KGCONS10}
{The 1000 Genomes Project Consortium}. 2010.
\newblock {{A} map of human genome variation from population-scale sequencing}.
\newblock Nature. 467:1061--1073.

\bibitem[{Ulitsky et~al.(2011)Ulitsky, Shkumatava, Jan, Sive, and
  Bartel}]{ULITETAL11}
Ulitsky I, Shkumatava A, Jan CH, Sive H, Bartel DP. 2011.
\newblock {{C}onserved function of linc{R}{N}{A}s in vertebrate embryonic
  development despite rapid sequence evolution}.
\newblock Cell. 147:1537--1550.

\bibitem[{Watterson(1975)}]{WATT75}
Watterson GA. 1975.
\newblock {{O}n the number of segregating sites in genetical models without
  recombination}.
\newblock Theor Popul Biol. 7:256--276.

\bibitem[{Williamson et~al.(2005)Williamson, Hernandez, Fledel-Alon, Zhu,
  Nielsen, and Bustamante}]{WILLETAL05}
Williamson SH, Hernandez R, Fledel-Alon A, Zhu L, Nielsen R, Bustamante CD.
  2005.
\newblock {{S}imultaneous inference of selection and population growth from
  patterns of variation in the human genome}.
\newblock Proc. Natl. Acad. Sci. U.S.A. 102:7882--7887.

\bibitem[{Wilson et~al.(2011)Wilson, Hernandez, Andolfatto, and
  Przeworski}]{WILSETAL11}
Wilson DJ, Hernandez RD, Andolfatto P, Przeworski M. 2011.
\newblock {{A} population genetics-phylogenetics approach to inferring natural
  selection in coding sequences}.
\newblock PLoS Genet. 7:e1002395.

\bibitem[{Yang(1994)}]{YANG94}
Yang Z. 1994.
\newblock Maximum likelihood phylogenetic estimation from {DNA} sequences with
  variable rates over sites: approximate methods.
\newblock J Mol Evol. 39:306--314.

\bibitem[{Yi et~al.(2010)Yi, Liang, Huerta-Sanchez et~al.}]{YIETAL10}
Yi X, Liang Y, Huerta-Sanchez E, et~al. (70 co-authors). 2010.
\newblock {{S}equencing of 50 Human Exomes Reveals Adaptation to High
  Altitude}.
\newblock Science. 329:75--78.

\bibitem[{Zhang and Li(2005)}]{ZHANGLI05}
Zhang L, Li WH. 2005.
\newblock {{H}uman {S}{N}{P}s reveal no evidence of frequent positive
  selection}.
\newblock Mol. Biol. Evol. 22:2504--2507.

\end{thebibliography}
\end{document}